\renewcommand{\phi}{\varphi}
\renewcommand{\epsilon}{\varepsilon}
\def\mean#1{[\![ \, #1 \, ]\!]}
\newcommand{\PMet}{\mathbf{PMet}}
\newcommand{\Set}{\mathbf{Set}}
\newcommand{\Id}{\mathrm{Id}}
\newcommand{\id}{\mathrm{id}}
\newcommand{\nonexpansiveTo}{\raisebox{-1.5pt}{\ensuremath{\,\mathop{\overset{1}{\to}}\,}}}
\newcommand{\cat}[1]{\mathbf{#1}}
\newcommand{\EM}{\ensuremath{\mathcal{EM}}}
\newcommand{\prealinf} {\ensuremath{[0,\infty]}}
\newcommand{\reals}{{[0,\top]}}
\newcommand{\N}{\mathbb{N}}
\newcommand{\set}[1]{\left\{#1\right\}}
\newcommand{\tuple}[1]{\left\langle#1\right\rangle}
\newcommand{\supp}{\ensuremath{\mathrm{supp}}}
\newcommand{\ev}{\ensuremath{ev}} 
\newcommand{\Kantorovich}[2]{\ensuremath{#2^{\,\uparrow {#1}}}} 
\newcommand{\LiftedFunctor}[1]{\ensuremath{\overline{#1}}}
\newcommand{\EvaluationFunctor}[1]{\ensuremath{\widetilde{#1}}}
\newcommand{\Wasserstein}[2]{\ensuremath{#2^{\,\downarrow {#1}}}} 
\newcommand{\LiftedMetric}[2]{\ensuremath{#2^{#1}}}
\newcommand{\Couplings}[1]{\Gamma_{#1}}
\newcommand{\Powerset}{\ensuremath{\mathcal{P}}}
\newcommand{\PowersetFinite}{\ensuremath{\Powerset_{\!f\!i\!n}}}
\newcommand{\Distributions}{\ensuremath{\mathcal{D}}}
\newenvironment{diagram}{\begin{tikzpicture}[->,auto,>=latex]}{\end{tikzpicture}}
\DeclareMathOperator*{\argmax}{arg\max}  
\theoremstyle{hkplain}
\newaliascnt{propositionApx}{theoremApx}
\newaliascnt{lemmaApx}{theoremApx}
\newtheorem{lemmaApx}[lemmaApx]{Lemma}
\theoremstyle{hkdefinition}
\newaliascnt{definitionApx}{theoremApx}
\newaliascnt{exampleApx}{theoremApx}
\title{Towards Trace Metrics\\ via Functor Lifting\thanks{This is an extended version of \cite{BBKK15}. It consists of the material of the original paper and an appendix containing the proofs of the presented results.}}
\author[1]{Paolo Baldan}
\author[2]{Filippo Bonchi}
\author[3]{Henning Kerstan}
\author[3]{Barbara~König}
\affil[1]{Dipartimento di Matematica, Università di Padova, Italy\\\href{mailto:baldan@math.unipd.it}{{baldan@math.unipd.it}}}%
\affil[2]{CNRS, ENS Lyon, Université de Lyon, France\\\href{mailto:filippo.bonchi@ens-lyon.fr}{{filippo.bonchi@ens-lyon.fr}}}%
\affil[3]{Universität Duisburg-Essen, Germany\\\href{mailto:henning.kerstan@uni-due.de}{{henning.kerstan@uni-due.de}}, \href{mailto:barbara_koenig@uni-due.de}{{barbara\_koenig@uni-due.de}}}
\subjclass{F.3.1 Specifying and Verifying and Reasoning about Programs, D.2.4 Software/Program Verification}
\keywords{trace metric, monad lifting, pseudometric, coalgebra}
\newlist{well-behaved-axioms}{enumerate}{1}
\setlist[well-behaved-axioms,1]{%
	label=\textcolor{black!70}{\textsc{W\arabic*.}},%
	ref=(\arabic*)%
}
\newcommand{\iflongversion}[2]{#1} 
\newcommand{\appendixref}{\autoref{sec:proofs}}
\begin{document}
\renewcommand{\sectionautorefname}{Section}
\renewcommand{\subsectionautorefname}{Subsection}
\makeatletter{}
%
\maketitle 
\begin{abstract}
\noindent We investigate the possibility of deriving metric trace semantics in a coalgebraic framework. First, we generalize a technique for systematically lifting functors from the category $\Set$ of sets to the category $\PMet$ of pseudometric spaces, by identifying conditions under which also natural transformations, monads and distributive laws can be lifted. By exploiting some recent work on an abstract determinization, these results enable the derivation of trace metrics starting from coalgebras in $\Set$. More precisely, for a coalgebra on $\Set$ we determinize it, thus obtaining a coalgebra in the Eilenberg-Moore category of a monad. When the monad can be lifted to $\PMet$, we can equip the final coalgebra with a behavioral distance. The trace distance between two states of the original coalgebra is the distance between their images in the determinized coalgebra through  the unit of the monad. We show how our framework applies to nondeterministic automata and probabilistic automata.
\end{abstract}

\section{Introduction}
When considering the behavior of state-based system models embodying
quantitative information, such as probabilities, time or cost, the interest normally shifts from behavioral equivalences to behavioral distances. In fact, in a quantitative setting, it is often quite unnatural to ask that two systems exhibit exactly the same behavior, while it can be more reasonable to require that the distance between their behaviors is sufficiently small~(see, e.g.,~\cite{GJS90,DGJP04,vBW05,bblm:total-variation-markov,afs:linear-branching-metrics-quant,afs:linear-branching-metrics,flt:quantitative-spectrum}).

Coalgebras \cite{Rut00} are a well-established abstract framework where a canonical notion of behavioral equivalence can be uniformly derived. The behavior of a system is represented as a coalgebra, namely a map of the form $X \to H X$, where $X$ is a state space and $H$ is a functor that describes the type of computation performed. For instance nondeterministic automata can be seen as coalgebras $X \to 2 \times  \mathcal{P}(X)^A$: for any state we specify whether it is final or not, and the set of successors for any given input in $A$. Under suitable conditions a final coalgebra exists which can be seen as minimized version of the system, so that two states are deemed equivalent when they correspond to the same state in the final coalgebra.

In a recent paper~\cite{BBKK14} we faced the problem of devising a framework where, given a coalgebra for an endofunctor $H$ on $\Set$, one can  systematically derive pseudometrics which measure the behavioral distance of states. A first crucial step is the lifting of $H$ to a functor $\LiftedFunctor{H}$ on $\PMet$, the category of pseudometric spaces. In particular, we presented two different approaches which can be viewed as generalizations of the Kantorovich and Wasserstein pseudometrics for probability measures. One can prove that the final coalgebra in $\Set$ can be endowed with a metric, arising as a solution of a fixpoint equation, turning it into the final coalgebra for the lifting $\LiftedFunctor{H}$. Since any coalgebra $X \to HX$ can be seen as a coalgebra in $\PMet$ by endowing $X$ with the discrete metric, the unique mapping into the final coalgebra provides a behavioral distance~on~$X$.

The canonical notion of equivalence for coalgebras, in a sense, fully
captures the behavior of the system as expressed by the functor $H$. As such, it naturally corresponds to bisimulation equivalences already defined for various concrete formalisms. Sometimes  one is interested in coarser equivalences, ignoring some aspects of a computation, a notable example being trace equivalence where the computational effect which is ignored is branching.

In this paper, relying on recent work on an abstract determinization construction for coalgebras in~\cite{DBLP:journals/corr/abs-1302-1046,JSS12,JSS15}, we extend the above framework in order to systematically derive trace metrics.
The mentioned work starts from the observation that the distinction between the behavior to be observed and the computational effects that are intended to be hidden from the observer, is sometimes formally captured by splitting the functor $H$ characterizing system computations in two components, a functor $F$ for the observable behavior and a monad $T$ describing the computational effects, e.g., lifting $1 + -$, the powerset functor $\mathcal{P}$ or the distribution functor $\mathcal{D}$ provides partial, nondeterministic or probabilistic computations, respectively.
For instance, the functor for nondeterministic automata $2 \times  \Powerset(X)^A$ can be seen as the composition of the functor $F X = 2 \times X^A$, describing the transitions, with the powerset monad $T = \mathcal{P}$, capturing nondeterminism. Trace semantics can be derived by viewing a coalgebra $X \to 2 \times \mathcal{P}(X)^A$ as a coalgebra $\mathcal{P}(X) \to 2 \times \mathcal{P}(X)^A$, via a determinization construction.
Similarly probabilistic automata can be seen as coalgebras of the form
$X \to [0,1] \times \mathcal{D}(X)^A$, yielding coalgebras
$\mathcal{D}(X) \to [0,1] \times \mathcal{D}(X)^A$ via
determinization.

On this basis,~\cite{JSS15} develops a framework for deriving behavioral equivalences which only considers the visible behavior, ignoring the computational effects. The core idea consists in ``incorporating'' the effect of the monad also in the set of states $X$, which thus becomes $T X$, by means of a construction that can be seen as an abstract form of determinization. 
For functors of the shape $FT$, this can be done by lifting $F$ to a functor $\widehat{F}$ in $\EM(T)$, the Eilenberg-Moore category of $T$, using a distributive law between $F$ and $T$. In fact, the final $F$-coalgebra lifts to the final $\widehat{F}$-coalgebra in $\EM(T)$. The technique works, at the price of some complications, also for functors of the shape $TF$~\cite{JSS15}.

Here, we exploit the results in~\cite{JSS15} for
systematically deriving metric trace semantics for $\Set$-based
coalgebras. The situation is summarized in the diagram at the end of
\autoref{sec:generalized-powerset}.
As a first step, building on our technique for lifting functors from the category $\Set$ of sets to the category $\PMet$ of pseudometric spaces, we identify conditions under which also natural transformations, monads and distributive laws can be lifted.
In this way we obtain an adjunction between $\PMet$ and
$\EM(\LiftedFunctor{T})$, where $\LiftedFunctor{T}$ is the lifted
monad. Via the lifted distributive law we can transfer a functor
$\LiftedFunctor{F}\colon \PMet \to \PMet$ to an endofunctor
$\widehat{\LiftedFunctor{F}}$ on $\EM(\LiftedFunctor{T})$. By using the
trivial discrete distance, coalgebras of the form $TX\to FTX$ can now
live in $\EM(\LiftedFunctor{T})$ and can be equipped with a trace
distance via a map into the final coalgebra. This final coalgebra is
again obtained by lifting the final $\LiftedFunctor{F}$-coalgebra,
i.e.\ a coalgebra equipped with a behavioral distance, to
$\EM(\LiftedFunctor{T})$. 

The trace distance between two states of the original coalgebra can
then be defined as the distance between their images in the
determinized coalgebra through the unit of the monad. We illustrate
our framework by thoroughly discussing two running examples, namely
nondeterministic automata and probabilistic automata. We show that it
allows us to recover known or meaningful trace distances such as the
standard ultrametric on word languages for nondeterministic automata
or the total variation distance on distributions for probabilistic
automata.

The paper is structured as follows. In \autoref{sec:preliminaries} we
will introduce our notation and quickly recall the basics of our
lifting framework from~\cite{BBKK14}. Then, in
\autoref{sec:compositionality}, we tackle the question of
compositionality, i.e. we investigate whether based on liftings of two
functors we can obtain a lifting of the composed functor. The lifting
of natural transformations and monads is treated in
\autoref{sec:monadlifting}. Equipped with these tools, we show as main
result in \autoref{sec:tracemetrics} how to obtain trace pseudometrics
in the Eilenberg-Moore category of a lifted monad. We conclude our
paper with a discussion on related and future work
(\autoref{sec:conclusion}). Proofs can be found in \appendixref.

\section{Preliminaries}
\label{sec:preliminaries}
In this section we recap some basic notions and fix the corresponding
notation. We also briefly recall the results in~\cite{BBKK14} which
will be exploited in the paper.

We assume that the reader is familiar with the basic notions of category theory, especially with the definitions of functor, product, coproduct and weak pullbacks.

For a function $f\colon X \to Y$ and sets $A \subseteq X$, $B \subseteq Y$ we write $f[A] := \set{f(a) \mid a \in A}$ for the \emph{image} of $A$ and $f^{-1}[B]= \set{a \in A \mid f(x) \in B}$ for the \emph{preimage} of $B$. Finally, if $Y \subseteq [0,\infty]$ and $f,g\colon X \to Y$ are functions we write $f \leq g$ if $\forall x \in X:f(x)\leq g(x)$. 

A \emph{probability distribution} on a given set $X$ is a function $P\colon X \to [0,1]$ satisfying $\sum_{x \in X}P(x) = 1$. For any set $B \subseteq X$ we define $P(B) = \sum_{x \in B}P(x)$. The \emph{support} of $P$ is the set $\supp(P) := \set{x \in X\mid P(x) > 0}$.

Given a natural number $n \in \N$ and a family $(X_i)_{i = 1}^n$ of sets $X_i$ we denote the projections of the (cartesian) product of the $X_i$ by $\pi_i\colon \prod_{i=1}^n X_i \to X_i$. For a source $(f_i\colon X \to X_i)_{i = 1}^n$ we denote the unique mediating arrow to the product by $\langle f_1,\dots,f_n\rangle \colon X \to \prod_{i=1}^{n}X_i$. Similarly, given a family of arrows $(f_i\colon X_i \to Y_i)_{i = 1}^n$, we write $f_1 \times \dots \times f_n = \langle f_1 \circ \pi_1,\dots,f_n \circ \pi_n \rangle\colon \prod_{i=1}^n X_i \to \prod_{i=1}^n Y_i$.

For $\top \in (0,\infty]$ and a set $X$ we call any function $d\colon X^2 \to [0,\top]$ a ($\top$-)\emph{distance} on $X$ (for our examples we will use $\top = 1$ or $\top=\infty$). Whenever $d$ satisfies, for all $x,y,z \in X$, $d(x,x) = 0$ (reflexivity), $d(x,y) = d(y,x)$ (symmetry) and $d(x,y) \leq d(x,z) + d(z,y)$ (triangle inequality) we call it a \emph{pseudometric} and if it additionally satisfies $d(x,y) = 0 \implies x=y$ we call it a \emph{metric}. Given such a function $d$ on a set $X$, we say that $(X,d)$ is a pseudometric/metric space. By $d_e\colon \reals^2\to \reals$ we denote the ordinary Euclidean distance on $\reals$, i.e., $d_e(x,y) = |x-y|$ for $x,y\in\reals\setminus\{\infty\}$, and -- where appropriate -- $d_e(x,\infty) = \infty$ if $x\neq \infty$ and $d_e(\infty,\infty) = 0$. Addition is defined in the usual way, in particular $x + \infty = \infty$ for $x\in\prealinf$. We call a function $f\colon X \to Y$ between pseudometric spaces $(X,d_X)$ and $(Y,d_Y)$ \emph{nonexpansive} and write $f\colon (X,d_X) \nonexpansiveTo (Y,d_Y)$ if $d_Y \circ (f \times f) \leq d_X$. If equality holds we call $f$ an \emph{isometry}.

By choosing a fixed maximal element $\top$ in our definition of distances, we ensure that the set of pseudometrics over a fixed set with pointwise order is a complete lattice (since $\reals$ is) and we obtain a complete and cocomplete category of pseudometric spaces and nonexpansive functions, which we denote by $\PMet$. Given a functor $F$ on $\Set$, we aim at constructing a functor $\LiftedFunctor{F}$ on $\PMet$ which is a lifting of $F$ in the following sense.

\begin{definition}[Lifting]
\label{def:lifting}
Let $U\colon \PMet \to \Set$ be the forgetful functor which maps every pseudometric space to its underlying set. A functor $\LiftedFunctor{F}\colon\PMet\to\PMet$ is called a \emph{lifting} of a functor $F\colon \Set \to \Set$ if it satisfies $U\LiftedFunctor{F} = FU$.
\end{definition}

\noindent Similarly to predicate lifting of coalgebraic modal logic \cite{DBLP:journals/tcs/Schroder08}, lifting to $\PMet$ can be conveniently defined once a suitable (evaluation) function from
$F[0,\top]$ to $[0,\top]$ is fixed.

\begin{definition}[Evaluation Function \& Evaluation Functor]
\label{def:evfct} 
Let $F$ be an endofunctor on $\Set$. An \emph{evaluation function} for $F$ is a function $\ev_F\colon F[0,\top] \to [0,\top]$. Given such a function, we define the \emph{evaluation functor} to be the endofunctor $\EvaluationFunctor{F}$ on $\Set/[0,\top]$, the slice category\footnote{The slice category $\Set/[0,\top]$ has as objects all functions $g\colon X\to[0,\top]$ where $X$ is an arbitrary set. Given $g$ as before and $h\colon Y \to [0,\top]$, an arrow from $g$ to $h$ is a function $f\colon X \to Y$ satisfying $h \circ f = g$.} over $[0,\top]$, via $\EvaluationFunctor{F}(g) = \ev_F\circ Fg$ for all $g \in \Set/[0,\top]$. On arrows $\EvaluationFunctor{F}$ is defined as $F$. 
\end{definition}

\noindent A first lifting technique leads to what we called the
Kantorovich pseudometric, which is the smallest possible pseudometric
$d^F$ on $FX$ such that, for all nonexpansive functions
$f\colon (X,d) \nonexpansiveTo (\reals,d_e)$, also
$\EvaluationFunctor{F}f\colon (FX,d^F)\nonexpansiveTo (\reals,d_e)$ is
again nonexpansive.

\begin{definition}[Kantorovich Pseudometric \& Kantorovich Lifting]
\label{def:kantorovich}
Let $F\colon \Set \to \Set$ be a functor with an evaluation function $\ev_F$. For every pseudometric space $(X,d)$ the \emph{Kantorovich pseudometric} on $FX$ is the function $\Kantorovich{F}{d}\colon FX\times FX\to \reals$, where for all $t_1,t_2 \in FX$:
\begin{align*}
	\Kantorovich{F}{d}(t_1,t_2) := \sup \set{d_e(\EvaluationFunctor{F}f(t_1),\EvaluationFunctor{F}f(t_2)) \mid f\colon (X,d) \nonexpansiveTo (\reals,d_e)}\,.
\end{align*}
The \emph{Kantorovich lifting} of the functor $F$ is the functor $\LiftedFunctor{F}\colon \PMet\to\PMet$ defined as $\LiftedFunctor{F}(X,d) = (FX,\Kantorovich{F}{d})$ and $\LiftedFunctor{F}f = Ff$. 
\end{definition}

\noindent This definition is sound i.e. $\Kantorovich{F}{d}$ is
guaranteed to be a pseudometric so that we indeed obtain a lifting of
the functor. A dual way for obtaining a
pseudometric on $FX$ relies on ideas from probability and
transportation theory. It is based on the notion of couplings, which
can be understood as a generalization of joint probability measures.

\begin{definition}[Coupling]
\label{def:coupling}
Let $F \colon \Set \to \Set$ be a functor and $n \in \N$. Given a set $X$ and $t_i \in FX$ for $1 \leq i \leq n$ we call an element $t \in F(X^n)$ such that $F\pi_i(t) = t_i$ a \emph{coupling} of the $t_i$ (with respect to $F$). We write $\Couplings{F}(t_1, t_2, \dots, t_n)$ for the set of all these couplings. 
\end{definition}

\noindent Based on these couplings we are now able to define an alternative distance on $FX$.

\begin{definition}[Wasserstein Distance \& Wasserstein Lifting]
\label{def:wasserstein}
Let $F\colon \Set \to \Set$ be a functor with evaluation function $\ev_F$. For every pseudometric space $(X,d)$ the \emph{Wasserstein distance} on $FX$ is the function $\Wasserstein{F}{d} \colon FX \times FX \to \reals$ given by, for all $t_1,t_2 \in FX$,  
\begin{align*}
	\Wasserstein{F}{d}(t_1, t_2) := \inf\set{\EvaluationFunctor{F}d(t) \mid t \in \Couplings{F}(t_1,t_2)}\,.
\end{align*}
If $\Wasserstein{F}{d}$ is a pseudometric for all pseudometric spaces $(X,d)$, we define the \emph{Wasserstein lifting} of $F$ to be the functor $\LiftedFunctor{F}\colon \PMet\to\PMet$, $\LiftedFunctor{F}(X,d) = (FX,\Wasserstein{F}{d})$, $\LiftedFunctor{F}f = Ff$.
\end{definition}

\noindent The names Kantorovich and Wasserstein used for the liftings
derive from transportation theory \cite{v:optimal-transport}.
Indeed we obtain a transport problem if we instantiate $F$ with the
distribution functor $\mathcal{D}$ (see also
\autoref{exa:distributions} below). In order to measure the distance
between two probability distributions $s,t\colon X\to [0,1]$ it is
useful to think of the following analogy: assume that $X$ is a
collection of cities (with distance function $d$ between them) and
$s,t$ represent supply and demand (in units of mass).  The distance
between $s,t$ can be measured in two ways: the first is to set up an
optimal transportation plan with minimal costs (also called coupling)
to transport goods from cities with excess supply to cities with
excess demand. The cost of transport is determined by the product of
mass and distance.  In this way we obtain the Wasserstein distance.  A
different view is to imagine a logistics firm that is commissioned to
handle the transport.  It sets prices for each city and buys and sells
for this price at every location.  However, it has to ensure that the
price function (here, $f$) is nonexpansive, i.e., the difference of
prices between two cities is smaller than the distance of the cities,
otherwise it will not be worthwhile to outsource this task. This firm
will attempt to maximize its profit, which can be considered as the
Kantorovich distance of $s,t$. The Kantorovich-Rubinstein duality
informs us that these two views lead to the exactly same result

In \autoref{def:wasserstein} we are not guaranteed, in
general, that $\Wasserstein{F}{d}$ is a pseudometric. This
is the case if we require $F$ to preserve weak-pullbacks and
impose the following restrictions on the evaluation function.

\begin{definition}[Well-Behaved]\label{def:well-behaved}
Let $F$ be a functor with an evaluation function $\ev_F$. We call $\ev_F$ \emph{well-behaved} if it satisfies the following conditions:
\begin{well-behaved-axioms}
\item $\EvaluationFunctor{F}$ is monotone, i.e., for $f,g\colon X\to[0,\top]$ with $f\le g$, we have $\EvaluationFunctor{F}f\le\EvaluationFunctor{F}g$.\label{def:well-behaved:monotone}
\item For each $t\in F([0,\top]^2)$ it holds that $d_e(\ev_F(t_1), \ev_F(t_2)) \leq \EvaluationFunctor{F}d_e(t)$ for $t_i : = F\pi_i(t)$.\label{def:well-behaved:leq}
\item $\ev_F^{-1}[\set{0}] = Fi[F\{0\}]$ where $i \colon \set{0} \hookrightarrow[0,\top]$ is the inclusion map. \label{def:well-behaved:pullback}
\end{well-behaved-axioms}
\end{definition}

\noindent While condition W1 is quite natural, for W2 and W3 some explanations are in order. Condition~W2 ensures that $\EvaluationFunctor{F}\id_\reals = \ev_F\colon F\reals\to\reals$ is nonexpansive once $d_e$ is lifted to $F\reals$ (recall that for the Kantorovich lifting we require $\EvaluationFunctor{F}f$ to be nonexpansive for any nonexpansive $f$). Condition~W3 requires that exactly the elements of $F\{0\}$ are mapped to $0$ via $\ev_F$. This is necessary for reflexivity of the Wasserstein pseudometric. Indeed, with this definition at hand we were able to prove the desired result. 

\begin{proposition}[{\cite{BBKK14}}]\label{prop:wasserstein-pseudo}
If $F$ preserves weak pullbacks and $\ev_F$ is well-behaved, then $\Wasserstein{F}{d}$ is a pseudometric for any pseudometric space $(X,d)$.
\end{proposition}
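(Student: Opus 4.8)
The plan is to verify the three defining properties of a pseudometric --- reflexivity, symmetry, and the triangle inequality --- for $\Wasserstein{F}{d}$, using the well-behavedness axioms W1--W3 and weak-pullback preservation one at a time.

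For \emph{reflexivity}, I would observe that for any $t \in FX$ the element $F\langle \id_X, \id_X \rangle(t) \in F(X^2)$ is a coupling of $t$ with itself, since $F\pi_i \circ F\langle\id,\id\rangle = F(\pi_i \circ \langle\id,\id\rangle) = F\id = \id$. Evaluating the functor $\EvaluationFunctor{F}$ on this coupling gives $\EvaluationFunctor{F}d\bigl(F\langle\id,\id\rangle(t)\bigr) = \ev_F\bigl(F(d \circ \langle\id,\id\rangle)(t)\bigr) = \ev_F\bigl(F(\lambda x.\,d(x,x))(t)\bigr)$. Since $d(x,x)=0$ for all $x$, the map $x \mapsto d(x,x)$ factors through $\{0\} \hookrightarrow \reals$, so the argument lies in $Fi[F\{0\}]$, and by W3 (the nontrivial inclusion $Fi[F\{0\}] \subseteq \ev_F^{-1}[\{0\}]$) its image under $\ev_F$ is $0$. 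Hence the infimum defining $\Wasserstein{F}{d}(t,t)$ is $0$. For \emph{symmetry}, the swap map $\sigma\colon X^2 \to X^2$ induces a bijection $F\sigma$ between $\Couplings{F}(t_1,t_2)$ and $\Couplings{F}(t_2,t_1)$, and because $d\circ\sigma = d$ we get $\EvaluationFunctor{F}d(F\sigma(t)) = \EvaluationFunctor{F}d(t)$, so the two infima coincide.

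The \emph{triangle inequality} is the main obstacle and the only place weak-pullback preservation is needed. Given $t_1, t_2, t_3 \in FX$, I would take couplings $t_{12} \in \Couplings{F}(t_1,t_2)$ and $t_{23} \in \Couplings{F}(t_2,t_3)$ that nearly realize the respective infima. These are two elements of $F(X^2)$ that agree, after applying the appropriate projections, on their common $t_2$-component; weak-pullback preservation lets me glue them into a single $t_{123} \in F(X^3)$ with $F p_{12}(t_{123}) = t_{12}$ and $F p_{23}(t_{123}) = t_{23}$, where $p_{12}, p_{23}\colon X^3 \to X^2$ are the evident projections. Then $F p_{13}(t_{123})$ is a coupling of $t_1$ and $t_3$, and I need to bound $\EvaluationFunctor{F}d(Fp_{13}(t_{123}))$ by $\EvaluationFunctor{F}d(t_{12}) + \EvaluationFunctor{F}d(t_{23})$. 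Pointwise on $X^3$ the triangle inequality for $d$ gives $d(x_1,x_3) \le d(x_1,x_2) + d(x_2,x_3)$, i.e.\ $d \circ p_{13} \le (d\circ p_{12}) + (d\circ p_{23})$ as functions $X^3 \to \reals$; applying W1 (monotonicity of $\EvaluationFunctor{F}$) turns this into $\EvaluationFunctor{F}(d\circ p_{13}) \le \EvaluationFunctor{F}\bigl((d\circ p_{12}) + (d\circ p_{23})\bigr)$. The remaining subtlety is to show $\EvaluationFunctor{F}(g + h) \le \EvaluationFunctor{F}g + \EvaluationFunctor{F}h$ (a form of subadditivity of the evaluation functor); this should follow from W2 applied to a suitable coupling built from the pair $\langle g, h\rangle\colon X^3 \to \reals^2$, using that addition $+\colon \reals^2 \to \reals$ is itself nonexpansive with respect to $d_e$ and hence interacts with $\ev_F$ via W2. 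Taking infima over the approximating couplings and letting the approximation error go to zero then yields $\Wasserstein{F}{d}(t_1,t_3) \le \Wasserstein{F}{d}(t_1,t_2) + \Wasserstein{F}{d}(t_2,t_3)$, completing the proof.
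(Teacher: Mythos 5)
Your proposal is correct and follows exactly the route of the proof in \cite{BBKK14}, which this paper cites rather than reproves: reflexivity from W3 via the diagonal coupling $F\langle\id,\id\rangle(t)$, symmetry via the swap, and the triangle inequality by gluing near-optimal couplings over $X^3$ using weak-pullback preservation of $X^3\cong X^2\times_X X^2$ and then combining W1 with a subadditivity property of $\EvaluationFunctor{F}$ extracted from W2. The one imprecision is in that last step: the element of $F(\reals^2)$ to feed into W2 is $F\langle g\oplus h,\,h\rangle(t)$ (with $\oplus$ denoting addition truncated at $\top$, needed when $\top<\infty$), for which $d_e\circ\langle g\oplus h,h\rangle\le g$ together with W1 yields $\EvaluationFunctor{F}(g\oplus h)\le\EvaluationFunctor{F}g+\EvaluationFunctor{F}h$, rather than the pair $\langle g,h\rangle$ you name (which via W2 only bounds $d_e(\EvaluationFunctor{F}g,\EvaluationFunctor{F}h)$ by $\EvaluationFunctor{F}|g-h|$) --- but this is a local fix and the proof stands.
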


\noindent From now on, whenever we use the Wasserstein lifting
$\Wasserstein{F}{d}$, we implicitly assume to be in the hypotheses of
\autoref{prop:wasserstein-pseudo}. 
It can be shown that, in general,
$\Kantorovich{F}{d} \leq \Wasserstein{F}{d}$. Whenever equality holds
we say that the functor and the evaluation function satisfy the
\emph{Kantorovich-Rubinstein duality}. This is helpful in many
situations (e.g., in \cite{vBW06} it allowed to reuse an efficient
linear programming algorithm to compute behavioral distance) but it is
usually difficult to
obtain.

We now recall two examples which will play an important role in this paper. First, we consider the following bounded variant of the powerset functor.

\begin{example}[{Finite Powerset}]
\label{exa:powersetfinite}
The finite powerset functor $\PowersetFinite$ assigns to each set $X$ the set $\PowersetFinite X = \set{S \subseteq X \mid |S| < \infty}$ and to each function $f\colon X \to Y$ the function $\PowersetFinite f\colon \PowersetFinite X \to \PowersetFinite Y$, $\PowersetFinite f(S) := f[S]$.
This functor preserves weak pullbacks and the evaluation function $\max\colon \PowersetFinite([0,\infty]) \to [0,\infty]$ with $\max \emptyset = 0$ is well-behaved. The Kantorovich-Rubinstein duality holds and the resulting distance is the Hausdorff pseudometric which, for any pseudometric space $(X,d)$ and any $X_1, X_2 \in \PowersetFinite X$, is defined as
\begin{align*}
	 d_H(X_1,X_2) = \max\left\{\max_{x_1\in X_1} \min_{x_2\in X_2} d(x_1,x_2), \max_{x_2\in X_2} \min_{x_1\in X_1} d(x_1,x_2) \right\}\,.
	 \end{align*}
\end{example}
\noindent Our second example is the following finite variant of the distribution functor.
\begin{savetheorem}[Finitely Supported Distributions]{example}{exa:distributions}
The probability distribution functor $\Distributions$ assigns to each
set $X$ the set $\Distributions X = \set{P\colon X \to [0,1] \
  \middle|\ |\supp(P)| < \infty, P(X) = 1}$ and
to each function $f\colon X \to Y$ the function $\Distributions
f\colon \Distributions X \to \Distributions Y$, $\Distributions
f(P)(y) = \sum_{x \in f^{-1}[\set{y}]} P(x) = P(f^{-1}[\set{y}])$. $\Distributions$ preserves weak pullbacks and the evaluation function $\ev_\Distributions \colon \Distributions[0,1] \to [0,1]$, $\ev_\Distributions(P) = \sum_{r \in [0,1]}r\cdot P(r)$ is well-behaved. For any pseudometric space $(X,d)$ we obtain the Wasserstein pseudometric which, for any $P_1, P_2 \in \Distributions X$, is defined as
\begin{align*}
	\Wasserstein{\Distributions }{d}(P_1,P_2)= \min \set{\sum_{x_1,x_2 \in X} d(x_1,x_2) \cdot P(x_1, x_2) \ \middle|\ P \in \Couplings{\Distributions }(P_1, P_2)}\,.
\end{align*}
The Kantorovich-Rubinstein duality \cite{v:optimal-transport} holds from classical results in transportation theory. 
\end{savetheorem}

\noindent While these two functors can be nicely lifted using the
theory developed so far, there are other functors that require a more
general treatment. For instance, consider the endofunctor $F = B
\times \_$ (left product with $B$) for some fixed $B$. Notice that for
$t_1,t_2 \in FX = B \times X$ with $t_i=(b_i,x_i)$ a coupling exists
iff $b_1 = b_2$. As a consequence, when $b_1 \neq b_2$, irrespectively of
the evaluation function we choose and of the distance between $x_1$
and $x_2$ in $(X, d)$, the lifted Wasserstein pseudometric will always
result in $ \Wasserstein{F}{d}(t_1,t_2) =\top$.  This can be
counterintuitive, e.g., taking $B=[0,1]$, $X \not=\emptyset$ and $t_1
= (0,x)$ and $t_2=(\epsilon,x)$ for a small $\epsilon > 0$ and an $x
\in X$.
The reason is that we think of $B = [0,1]$ as endowed with a non-discrete pseudometric, like e.g. the Euclidean metric $d_e$, plugged into the product after the lifting.
This intuition can be indeed formalized by considering the lifting of the product seen as a functor from $\Set \times \Set$ into $\Set$. 
More generally, it can be seen that the definitions and results introduced so far for endofunctors in $\Set$ straightforwardly extend to multifunctors on $\Set$, namely functors $F\colon \Set^n \to \Set$ on the product category $\Set^n$ for a natural number $n \in \N$. For ease of presentation we will not spell out the details here (they are spelled out in \cite{BBKK14}), but just provide an important example of a bifunctor (i.e. $n=2$).

\begin{savetheorem}[Product Bifunctor]{example}{exa:product-bifunctor}
The weak pullback preserving product bifunctor $F\colon \Set^2 \to \Set$ maps two sets $X_1, X_2$ to $F(X_1,X_2) = X_1\times X_2$ and two functions $f_i\colon X_i \to Y_i$ to the function $F(f_1,f_2) = f_1 \times f_2$. In this paper we will use the well-behaved evaluation functions $\ev_F\colon [0,1]^2 \to [0,1]$ presented in the table below. Therein we also list the pseudometric $\LiftedMetric{F}{(d_1,d_2)}\colon X_1 \times X_2 \to [0,\top]$ we obtain for pseudometric spaces $(X_1,d_1)$, $(X_2,d_2)$.
\begin{center}\begin{tabular}{c|c|c}
Parameters & $\ev_F(r_1,r_2)$ & $\LiftedMetric{F}{(d_1,d_2)}((x_1,x_2),(y_1,y_2))$ \\
\hline
$c_1,c_2 \in (0,1]$ & $\max \set{c_1r_1,c_2r_2}$ & $\max\set{c_1d_1(x_1,y_1), c_2d_2(x_2,y_2)}$\\
$c_1,c_2 \in (0,1], c_1+c_2 \leq 1$ & $c_1 x_1 + c_2 x_2$ &$c_1d_1(x_1,y_1)+c_2d_2(x_2,y_2)$
\end{tabular}\end{center}
For $c_1=c_2=1$, the first evaluation map yields exactly the categorical product in $\PMet$.
In both cases the Kantorovich-Rubinstein duality holds and the supremum [infimum] of the Kantorovich [Wasserstein] pseudometric is always a maximum [minimum]. 
\end{savetheorem} 

\section{Compositionality for the Wasserstein Lifting}
\label{sec:compositionality}

Our first step is to study compositionality of functor liftings, i.e.
we identify some sufficient conditions ensuring
$\LiftedFunctor{F}\,\LiftedFunctor{G} = \LiftedFunctor{FG}$. This
technical result will be often very useful since it allows us to
reason modularly and, consequently, to simplify the proofs needed in
the treatment of our examples. We will explicitly only consider the
Wasserstein approach which is the one employed in all the examples of
this paper.

Given evaluation functions $\ev_F$ and $\ev_G$, we can easily
construct an evaluation function for the composition $FG$ by defining
$\ev_{FG} := \EvaluationFunctor{F}\ev_G= \ev_F \circ F\ev_G$.  Our
first observation is that, whenever $F$ and $G$ preserve weak
pullbacks, well-behavedness is inherited.

\begin{savetheorem}[Well-Behavedness of Composed Evaluation Function]{proposition}{prop:comp:well-behaved}
Let $F$, $G$ be endofunctors on $\Set$ with evaluation functions $\ev_F$,  $\ev_G$. If both functors preserve weak pullbacks and both evaluation functions are well-behaved then also $\ev_{FG} = \ev_F \circ F\ev_G$ is well-behaved.
\end{savetheorem}
\noindent In the light of this result and the fact that $FG$ certainly preserves weak pullbacks if $F$ and $G$ do, we can safely use the Wasserstein lifting for $FG$. A sufficient criterion for compositionality is the existence of optimal couplings for $G$. 

\begin{savetheorem}[Compositionality]{proposition}{prop:comp}
Let $F,G$ be weak pullback preserving endofunctors on $\Set$ with well-behaved evaluation functions $\ev_F$, $\ev_G$ and let $(X,d)$ be a pseudometric space. Then $\Wasserstein{FG}{d} \geq \Wasserstein{F}{(\Wasserstein{G}{d})}$.
Moreover, if for all $t_1,t_2 \in GX$ there is an optimal $G$-coupling, i.e. $\gamma(t_1,t_2) \in \Gamma_G(t_1,t_2)$ such that $\Wasserstein{G}{d}(t_1,t_2) = \EvaluationFunctor{G}d(\gamma(t_1,t_2))$, then $\Wasserstein{FG}{d} = \Wasserstein{F}{(\Wasserstein{G}{d})}$.
\end{savetheorem}

\noindent This criterion will turn out to be very useful for our later
results. Nevertheless it provides just a sufficient condition for compositionality as the
next example shows.
\begin{savetheorem}{example}{exa:comp}
We consider the finite powerset functor $\PowersetFinite$ of \autoref{exa:powersetfinite} and the distribution functor $\Distributions$ of \autoref{exa:distributions} with their evaluation functions. Let $(X,d)$ be a pseudometric space.
\begin{enumerate}
\item We have $\Wasserstein{\Distributions\Distributions}{d} = \Wasserstein{\Distributions}{\left(\Wasserstein{\Distributions}{d}\right)}$, by~\autoref{prop:comp}, because optimal couplings always exist.
\item We have $\Wasserstein{\PowersetFinite\PowersetFinite}{d} = \Wasserstein{\PowersetFinite}{\left(\Wasserstein{\PowersetFinite}{d}\right)}$ although $\PowersetFinite$-couplings do not always exist.\label{exa:comp:3}
\end{enumerate}
\end{savetheorem}

\noindent Note that when we lift the functor $\PowersetFinite$ we do not have
couplings in the case when we determine the distance between an empty
set $\emptyset$ and a non-empty set $Y\subseteq X$, since there exists no subset of
$X\times X$ that projects to both.

Compositionality can be defined analogously for multifunctors. Again, we will not spell this out completely but we will use it to obtain the machine bifunctor. Before we can do that, we first need to define another endofunctor.

\begin{savetheorem}[Input Functor]{example}{exa:inputfunctor}
Let $A$ be a fixed finite set of inputs. The input functor $F=\_^A\colon \Set \to \Set$ maps a set $X$ to the exponential $X^A$ and a function $f\colon X \to Y$ to $f^A \colon X^A \to Y^A$, $f^A(g) = f \circ g$. This functor preserves weak pullbacks. The two evaluation functions listed below are well-behaved and yield the given Wasserstein pseudometric on $X^A$ for any pseudometric space $(X,d)$.
\begin{center}\begin{tabular}{c|c}
$\ev_F(s)$ & $\Wasserstein{F}{d}(s_1,s_2)$ \\
\hline
 $\max\iflongversion{\limits}{}_{a \in A} s(a)$ & $\max\iflongversion{\limits}{}_{a \in A}d\big(s_1(a), s_2(a)\big)$\\
$\sum\iflongversion{\limits}{}_{a \in A} s(a)$ &$\sum\iflongversion{\limits}{}_{a \in A}d\big(s_1(a), s_2(a)\big)$
\end{tabular}\end{center}
\end{savetheorem}

\noindent By composing this functor with the product bifunctor we obtain the machine bifunctor which we will use to obtain trace semantics.

\begin{savetheorem}[Machine Bifunctor]{example}{exa:machinebifunctor}
Let $A$ be a finite set of inputs, $I = \_^A$ the input functor of \autoref{exa:inputfunctor}, $\Id$ the identity endofunctor on $\Set$ and $P$ be the product bifunctor of \autoref{exa:product-bifunctor}. The \emph{machine bifunctor} is the composition $M := P \circ (\Id \times I)$ i.e. the bifunctor $M\colon \Set^2 \to \Set$ with $M(B,X) := B \times X^A$. Since for $\Id$ and $I$ there are unique (thus optimal) couplings we have compositionality. Depending on the choices of evaluation function for $P$ and $I$ (for $\Id$ we always take $\id_{[0,1]})$ we obtain the following well-behaved evaluation functions $\ev_M\colon [0,1] \times [0,1]^A \to [0,1]$.
\begin{center}\begin{tabular}{c|c|c|c}
Parameters & $\ev_P(r_1,r_2)$ & $\ev_I(s)$ & $\ev_M(o,s)$\\
\hline
$c_1,c_2 \in (0,1]$ & $\max \set{c_1r_1,c_2r_2}$ & $\max\iflongversion{\limits}{}_{a \in A}s(a)$& $\max\Big\{c_1o,c_2\max\iflongversion{\limits}{}_{a \in A}s(a)\Big\}$\\
\iflongversion{$\begin{matrix}c_1,c_2 \in (0,1], \\c_1+c_2 \leq 1\end{matrix}$}{$c_1,c_2 \in (0,1], c_1+c_2 \leq 1$} & $c_1 x_1 + c_2 x_2$ & $|A|^{-1}\sum\iflongversion{\limits}{}_{a \in A}s(a)$ &$c_1o + c_2|A|^{-1} \sum\iflongversion{\limits}{}_{a \in A}s(a)$.
\end{tabular}\end{center}
\noindent Let $(B,d_B)$, $(X,d)$ be pseudometric spaces. For any $t_1,t_2 \in M(B,X)$ with $t_i = (b_i,s_i) \in B \times X^A$ there is a unique and therefore necessarily optimal coupling $t := (b_1,b_2, \tuple{s_1,s_2})$. Depending on the evaluation function, we obtain for the first case%
\iflongversion%
{\begin{align*}\Wasserstein{M}{(d_B,d)}(t_1,t_2) = \max\set{c_1d_B(b_1,b_2), c_2 \cdot \max_{a \in A} d\big(s_1(a),s_2(a)\big)}\end{align*}}%
{$\Wasserstein{M}{(d_B,d)}(t_1,t_2) = \max\set{c_1d_B(b_1,b_2), c_2 \cdot \max_{a \in A} d(s_1(a),s_2(a))}$ }%
and for the second case 
\iflongversion{\begin{align*}\Wasserstein{M}{(d_B,d)}(t_1,t_2) ={c_1d_B(b_1,b_2) + {c_2}{|A|}^{-1} \sum_{a \in A} d\big(s_1(a),s_2(a)\big)}\,.\end{align*}}
{$\Wasserstein{M}{(d_B,d)}(t_1,t_2) ={c_1d_B(b_1,b_2) + {c_2}{|A|}^{-1} \sum_{a \in A} d(s_1(a),s_2(a))}$.}
\end{savetheorem}

\noindent Usually we will fix the first argument (the set of outputs) of the machine bifunctor and consider the obtained machine endofunctor $M_B := M(B,\_)$. However, for the same reasons as explained above for the product bifunctor, we need to consider it as bifunctor. One notable exception is the case where $B=2$, endowed with the discrete metric. Then we have the following result.

\begin{savetheorem}{example}{exa:m2}
Consider the machine endofunctor $M_2 := M(2,\_) = 2 \times \_^A$ with evaluation function $\ev_{M_2} \colon 2 \times [0,1]^A, (o,s) \mapsto c \cdot \ev_I(s)$ where $c \in (0,1]$ and $\ev_I$ is one of the evaluation functions for the input functor from \autoref{exa:inputfunctor}. If $d_2$ is the discrete metric on $2$ and $c = c_2$ (where $c_2$ is the parameter for the evaluation function of the machine bifunctor as in \autoref{exa:machinebifunctor}) then the pseudometric obtained via the bifunctor lifting coincides with the one obtained by endofunctor lifting i.e. for all pseudometric spaces $(X,d)$ we have $\Wasserstein{M}{(d_2,d)} = \Wasserstein{M_2}{d}$. Moreover, although couplings for $M_2$ do not always exist we have $\Wasserstein{\PowersetFinite M_2}{d} = \Wasserstein{\PowersetFinite}{\left(\Wasserstein{M_2}{d}\right)}$.
\end{savetheorem}

\section{Lifting of Natural Transformations and Monads}
\label{sec:monadlifting}
Recall that a monad on an arbitrary category $\cat{C}$ is a triple $(T, \eta, \mu)$ where $T \colon \cat{C} \to \cat{C}$ is an endofunctor and $\eta \colon \Id \Rightarrow T$, $\mu\colon T^2 \Rightarrow T$ are natural transformations called \emph{unit} ($\eta$) and \emph{multiplication} ($\mu$) such that the two diagrams below commute.
\begin{center}\begin{diagram}
\matrix[matrix of math nodes, column sep=1.6cm, row sep=.5cm] (m){
	T & T^2 & T && T^3 & T^2\\
	&T&&&T^2&T\\
};

\path (m-1-1) edge node[above] {$\eta T$} (m-1-2);
\path (m-1-3) edge node[above] {$T\eta$} (m-1-2);
\path (m-1-2) edge node[left] {$\mu$} (m-2-2);
\draw[-] (m-1-1) edge[double,double distance=2pt] (m-2-2);
\draw[-] (m-1-3) edge[double,double distance=2pt] (m-2-2);

\path (m-1-5) edge node[above] {$\mu T$} (m-1-6);
\path (m-1-5) edge node[left] {$T\mu$} (m-2-5);
\path (m-1-6) edge node[right] {$\mu$} (m-2-6);
\path (m-2-5) edge node[above] {$\mu$} (m-2-6);
\end{diagram}\end{center}
If we have a monad on $\Set$, we can of course use our framework to lift the endofunctor $T$ to a functor $\LiftedFunctor{T}$ on pseudometric spaces. A natural question that arises is, whether we also obtain a monad on pseudometric spaces, i.e. if the components of the unit and the multiplication are nonexpansive with respect to the lifted pseudometrics. In order to answer this question, we first take a closer look at sufficient conditions for lifting natural transformations.

\begin{savetheorem}[Lifting of a Natural Transformation]{proposition}{prop:nt-lifting}
Let $F$, $G$ be endofunctors on $\Set$ with evaluation functions
$\ev_F$, $\ev_G$ and $\lambda\colon F \Rightarrow G$ be a natural
transformation. Then the following holds for all pseudometric spaces
$(X,d)$. For the Kantorovich lifting:
\begin{enumerate}
\item If $\ev_{G} \circ \lambda_{[0,\top]} \leq \ev_{F}$ then $\Kantorovich{G}{d} \circ (\lambda_X \times \lambda_X) \leq \Kantorovich{F}{d}$, i.e. $\lambda_X$ is nonexpansive.\label{prop:nt-lifting:1}
\item If $\ev_{G} \circ \lambda_{[0,\top]} = \ev_{F}$ then $\Kantorovich{G}{d} \circ (\lambda_X \times \lambda_X) = \Kantorovich{F}{d}$, i.e. $\lambda_X$ is an isometry.\label{prop:nt-lifting:2}
\end{enumerate}
\noindent
while for the Wasserstein lifting 
\begin{enumerate}
\setcounter{enumi}{2}
\item If $\ev_{G} \circ \lambda_{[0,\top]} \leq \ev_{F}$ then $\Wasserstein{G}{d} \circ (\lambda_X \times \lambda_X) \leq \Wasserstein{F}{d}$, i.e. $\lambda_X$ is nonexpansive.\label{prop:nt-lifting:3}
		\item If $\ev_{G} \circ \lambda_{[0,\top]} = \ev_{F}$ and the Kantorovich Rubinstein duality holds for $F$, i.e. $\Kantorovich{F}{d} = \Wasserstein{F}{d}$, then $\Wasserstein{G}{d} \circ (\lambda_X \times \lambda_X) = \Wasserstein{F}{d}$, i.e. $\lambda_X$ is an isometry.\label{prop:nt-lifting:4}
\end{enumerate}
\end{savetheorem}

\noindent In the rest of the paper we will call a natural transformation $\lambda$ nonexpansive [an isometry] if (and only if) each of its components are nonexpansive [isometries] and write $\overline{\lambda}$ for the resulting natural transformation from $\LiftedFunctor{F}$ to $\LiftedFunctor{G}$. Instead of checking nonexpansiveness separately for each component of a natural transformation, we can just check the above (in-)equalities involving the two evaluation functions. 

By applying these conditions on the unit and multiplication of a given monad, we can now provide sufficient criteria for a monad lifting.

\begin{savetheorem}[Lifting of a Monad]{corollary}{cor:monad-lifting}
Let $(T, \eta, \mu)$ be a $\Set$-monad and $\ev_T$ an evaluation function for $T$. Then the following holds.
\begin{enumerate}
\item If $\ev_T \circ \eta_{[0,\top]} \leq \id_{[0,\top]}$ then $\eta$ is nonexpansive for both liftings. Hence we obtain the unit $\overline{\eta}\colon \LiftedFunctor{\Id}\Rightarrow \LiftedFunctor{T}$ in $\PMet$. 
\item If $\ev_T \circ \eta_{[0,\top]} = \id_{[0,\top]}$ then $\eta$ is an isometry for both liftings. 
\item Let $\LiftedMetric{T}{d} \in \{\Kantorovich{T}{d}, \Wasserstein{T}{d}\}$. If $\ev_T \circ \mu_{[0,\top]} \leq \ev_T \circ T\ev_T$ and compositionality holds for $TT$, i.e. $\LiftedMetric{T}{(\LiftedMetric{T}{d})} = \LiftedMetric{TT}{d}$, then $\mu$ is nonexpansive, i.e. $\LiftedMetric{T}{d} \circ (\mu_X \times \mu_X) \leq \LiftedMetric{T}{(\LiftedMetric{T}{d})}$. This yields the multiplication $\overline{\mu}\colon \LiftedFunctor{T}\,\LiftedFunctor{T} \Rightarrow \LiftedFunctor{T}$ in $\PMet$.
\end{enumerate}
\end{savetheorem}

\noindent We conclude this section with two examples of liftable monads.

\begin{example}[Finite Powerset Monad]\label{exa:monad:powfin}
The finite powerset functor $\PowersetFinite$ of
\autoref{exa:powersetfinite} can be seen as a monad, with unit $\eta$
consisting of the functions $\eta_X\colon X \to \PowersetFinite X$,
$\eta_X(x) = \set{x}$ and multiplication given by $\mu_X \colon
\PowersetFinite \PowersetFinite X \to \PowersetFinite X$, $\mu_X(S) =
\cup S$. We check if our conditions for the Wasserstein lifting are satisfied. Given $r \in [0,\infty]$ we have $\ev_T \circ \eta_{[0,\infty]}(r) = \max\set{r} = r$ and for $\mathcal{S} \in \PowersetFinite(\PowersetFinite[0,\top])$ we have $\ev_T \circ \mu_{[0,1]}(\mathcal{S}) = \max \cup \mathcal{S} = \max \cup_{S \in \mathcal{S}} S$ and $\ev_T \circ T\ev_T (\mathcal{S}) = \max\left(\ev_T[\mathcal{S}]\right) = \max\set{\max S \mid S \in \mathcal{S}}$ and thus both values coincide. Moreover, we recall from \autoref{exa:comp}.\ref{exa:comp:3} that we have compositionality for $\PowersetFinite\PowersetFinite$. Therefore, by \autoref{cor:monad-lifting} $\eta$ is an isometry and $\mu$ nonexpansive.
\end{example}

\begin{savetheorem}[Distribution Monad]{example}{exa:monad:distr}
The probability distribution functor $\mathcal{D}$ of \autoref{exa:distributions} can be seen as a monad: the unit $\eta$ consists of the functions $\eta_X\colon X \to \Distributions X$, $\eta_X(x) = \delta_{x}^X$ where $\delta_x^X$ is the Dirac distribution and the multiplication is given by $\mu_X \colon \Distributions\Distributions X \to \Distributions X$, $\mu_X(P) = \lambda x.\sum_{q \in \Distributions X}P(q) \cdot q(x)$. We consider its Wasserstein lifting. Since $[0,1] = \Distributions 2$ we can see that $\ev_\Distributions = \mu_2$. Using this and the monad laws we have $\ev_\Distributions \circ \eta_{[0,1]} = \mu_2 \circ \eta_{\Distributions 2} = \id_{\Distributions X} = \id_{[0,1]}$ and also $\ev_\Distributions \circ \mu_{[0,1]} = \mu_2 \circ \mu_{\Distributions 2} = \mu_2 \circ \Distributions \mu_2 = \ev_\Distributions \circ \Distributions\ev_\Distributions$. Moreover, since we always have optimal couplings, we have compositionality for $\Distributions\Distributions$ by \autoref{prop:comp}. Thus by \autoref{cor:monad-lifting} $\eta$ is an isometry and $\mu$ nonexpansive.
\end{savetheorem}

\section{Trace Metrics in Eilenberg-Moore} 
\label{sec:tracemetrics}
As mentioned in the introduction, trace semantics can be characterized by means of coalgebras either over Kleisli \cite{DBLP:journals/tcs/PowerT99,HJS07} or over Eilenberg-Moore \cite{DBLP:journals/corr/abs-1302-1046,JSS15} categories. We focus on the latter approach. We first recall the basic notions of Eilenberg-Moore algebras and distributive laws, and discuss how the results in the paper can be used to ``lift'' the associated determinization construction. This is then applied to derive trace metrics for nondeterministic automata and probabilistic automata, by relying on suitable liftings of the machine functor.

\subsection{Generalized Powerset Construction}
\label{sec:generalized-powerset}
An \emph{Eilenberg-Moore algebra} for a monad $(T,\eta,\mu)$ is a $\cat{C}$-arrow $a\colon TA \to A$ making the left and middle diagram below commute. Given two such algebras $a\colon TA \to A$ and $b\colon TB \to B$, a morphism from $a$ to $b$ is a $\cat{C}$ arrow $f\colon A \to B$ making the right diagram below commute.
 \begin{center}\begin{diagram}
\matrix[matrix of math nodes, column sep=1.2cm, row sep=.5cm] (m){
	A & TA && T^2A & TA && TA & TB\\
	&A 	&& TA & A && A & B\\
};
\draw[-] (m-1-1) edge[double,double distance=2pt] (m-2-2);
\draw (m-1-1) edge node[above]{$\eta_A$}(m-1-2);
\draw (m-1-2) edge node[right]{$a$}(m-2-2);

\draw (m-1-4) edge node[above]{$\mu_A$}(m-1-5);
\draw (m-1-5) edge node[right]{$a$}(m-2-5);
\draw (m-1-4) edge node[left]{$Tc$}(m-2-4);
\draw (m-2-4) edge node[below]{$a$}(m-2-5);

\draw (m-1-7) edge node[above]{$Tf$}(m-1-8);
\draw (m-1-8) edge node[right]{$b$}(m-2-8);
\draw (m-1-7) edge node[left]{$a$}(m-2-7);
\draw (m-2-7) edge node[below]{$f$}(m-2-8);
\end{diagram}\end{center}
Eilenberg-Moore algebras and their morphisms form a category denoted by $\EM(T)$.
A functor $\widehat{F}\colon\EM(T)\to\EM(T)$ is called a \emph{lifting} of $F\colon \cat{C} \to \cat{C}$ to $\EM(T)$ if $U^T\widehat{F} = FU^T$,
with $U^T\colon \EM(T)\to \cat{C}$ the forgetful functor. A natural transformation $\lambda\colon TF \Rightarrow FT$ is an \emph{$\EM$-law} (also called \emph{distributive law}) if it satisfies:
 \begin{center}\begin{diagram}
\matrix[matrix of math nodes, column sep=1.6cm, row sep=.5cm] (m){
	F & F 	&& T^2F & TFT & FT^2\\
	TF &FT 	&& TF & & FT\\
};
\draw[-] (m-1-1) edge[double,double distance=2pt] (m-1-2);
\draw (m-1-1) edge node[left]{$\eta F$}(m-2-1);
\draw (m-1-2) edge node[right]{$F\eta$}(m-2-2);
\draw (m-2-1) edge node[below]{$\lambda$}(m-2-2);

\draw (m-1-4) edge node[above]{$T\lambda$}(m-1-5);
\draw (m-1-5) edge node[above]{$\lambda T$}(m-1-6);
\draw (m-1-4) edge node[left]{$\mu F$}(m-2-4);
\draw (m-2-4) edge node[below]{$\lambda$}(m-2-6);
\draw (m-1-6) edge node[right]{$F\mu$}(m-2-6);
\end{diagram}\end{center}
Liftings and $\EM$-laws are related by the following folklore result (see e.g.~\cite{JSS12}).
\begin{proposition}
There is a bijective correspondence between $\EM$-laws and liftings to $\EM$-categories.
\end{proposition}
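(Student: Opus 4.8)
The plan is to build the bijection by hand: turn an $\EM$-law into a lifting, turn a lifting into an $\EM$-law, and then check that these two assignments are mutually inverse.

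First, given an $\EM$-law $\lambda\colon TF\Rightarrow FT$, I would define a lifting $\widehat{F}$ by sending an algebra $a\colon TA\to A$ to $\widehat{F}(a):=Fa\circ\lambda_A\colon TFA\to FA$ and an algebra morphism $f$ to $Ff$. The work here is to verify that $\widehat{F}(a)$ is again an Eilenberg--Moore algebra: the unit law should follow from the triangle $\lambda_A\circ\eta_{FA}=F\eta_A$ of the $\EM$-law together with $a\circ\eta_A=\id$, and the associativity law from the pentagon $\lambda_A\circ\mu_{FA}=F\mu_A\circ\lambda_{TA}\circ T\lambda_A$, the naturality of $\lambda$, and $a\circ\mu_A=a\circ Ta$. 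Naturality of $\lambda$ should likewise show that $Ff$ is a morphism $\widehat{F}(a)\to\widehat{F}(b)$ whenever $f\colon a\to b$ is one. Functoriality of $\widehat{F}$ and the identity $U^T\widehat{F}=FU^T$ are then immediate, so $\widehat{F}$ is a lifting of $F$.

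Conversely, given a lifting $\widehat{F}$, I would feed it the free algebra $(TX,\mu_X)$; since $U^T\widehat{F}=FU^T$, the result is an algebra $\alpha_X\colon TFTX\to FTX$, natural in $X$, and I would set $\lambda_X:=\alpha_X\circ TF\eta_X\colon TFX\to FTX$ (equivalently, $\lambda_X$ is the adjoint transpose of $F\eta_X\colon FX\to FTX$ along the free/forgetful adjunction, i.e.\ the unique $\EM(T)$-morphism $(TFX,\mu_{FX})\to(FTX,\alpha_X)$ restricting to $F\eta_X$ along $\eta_{FX}$). Naturality of $\lambda$ should come from naturality of $\alpha$ and of $\eta$; the $\eta$-triangle should reduce, via naturality of $\eta$, to the unit law of the algebra $\alpha_X$. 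For the $\mu$-pentagon I expect to use that $\mu_X$ is itself an $\EM(T)$-morphism $(TTX,\mu_{TX})\to(TX,\mu_X)$, so that $\widehat{F}\mu_X=F\mu_X$ intertwines $\alpha_{TX}$ and $\alpha_X$; combining this with the associativity law of $\alpha_X$, the naturality of $\lambda$, and the monad laws should give the pentagon.

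Finally, for mutual inverseness: starting from $\lambda$, the induced structure on $FTX$ is $\alpha_X=F\mu_X\circ\lambda_{TX}$, and the reconstructed law is $F\mu_X\circ\lambda_{TX}\circ TF\eta_X=F\mu_X\circ FT\eta_X\circ\lambda_X=\lambda_X$ by naturality of $\lambda$ and a monad law. Starting from a lifting $\widehat{F}$, the reconstructed lifting agrees with $\widehat{F}$ on free algebras by the very definition of $\lambda$; for an arbitrary algebra $a\colon TA\to A$, I would use that $a$ is an $\EM(T)$-morphism $(TA,\mu_A)\to(A,a)$, so $\widehat{F}a=Fa$ is an algebra morphism out of the already identified $(FTA,\alpha_A)$, and that $TFa$ is split epi (section $TF\eta_A$, since $a\circ\eta_A=\id$), which forces the structure map of $\widehat{F}(A,a)$ to be $Fa\circ\alpha_A\circ TF\eta_A=Fa\circ\lambda_A$ --- exactly the structure produced by the first construction. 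The main obstacles I anticipate are the diagram chase for the $\mu$-pentagon and this last uniqueness step (that a lifting is determined by its values on free algebras); the remaining verifications are routine bookkeeping.
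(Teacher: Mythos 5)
Your proof is correct: both constructions, the verification of the algebra laws and of the $\EM$-law axioms, and the two round-trip computations (using naturality of $\lambda$ plus $\mu_X\circ T\eta_X=\id$ in one direction, and the fact that $TFa\circ TF\eta_A=\id$ determines the structure map on a general algebra from its value on free algebras in the other) all go through exactly as you sketch. The paper itself states this proposition as a folklore result and defers to the literature rather than proving it, and your argument is precisely the standard one being referenced, so there is nothing to compare beyond noting that your sketch fills in the omitted proof correctly.
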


\noindent $\EM$-laws and liftings are crucial to characterize trace semantics
via coalgebras. Given a coalgebra $c\colon X\to FTX$, for a functor
$F$ and a monad $(T,\eta,\mu)$ such that there is a distributive law $\lambda\colon TF \Rightarrow FT$, one can build an $F$-coalgebra as 
\begin{center}\begin{diagram}
\matrix[matrix of math nodes, column sep=1.6cm, row sep=.5cm] (m){
	c^\sharp := \Big(TX & TFTX & FTTX & FTX\big)\\
};
\draw (m-1-1) edge node[above]{$Tc$}(m-1-2);
\draw (m-1-2) edge node[above]{$\lambda_{TX}$}(m-1-3);
\draw (m-1-3) edge node[above]{$F\mu_X$}(m-1-4);
\end{diagram}\end{center}
If there exists a final $F$-coalgebra $\omega\colon \Omega \to F\Omega$, one can define a semantic map for the $FT$-coalgebra $c$ into $\Omega$. First let $\mean{-}\colon TX \to \Omega$ be the unique coalgebra morphism from $c^{\sharp}$. Then take the map $\mean{-}\circ \eta\colon X \to \Omega$.

 \begin{center}\begin{diagram}
\matrix[matrix of math nodes, column sep=1cm, row sep=1cm] (m){
	X & TX 	 && \Omega \\
	FTX & 	&& T \Omega \\
};
\draw (m-1-1) edge node[above]{$\eta$}(m-1-2);
\draw (m-1-2) edge node[above]{$\mean{-}$}(m-1-4);
\draw (m-1-1) edge node[left]{$c$} (m-2-1);
\draw (m-1-4) edge node[right]{$\omega$} (m-2-4);
\draw (m-2-1) edge node[below]{$F\mean{-}$}(m-2-4);
\draw (m-1-2) edge node[right]{$c^{\sharp}$}(m-2-1);

\end{diagram}\end{center}

\noindent One can readily check that $c^{\sharp}$ is an algebra map from the $T$-algebra $\mu_X$ to $\widehat{F}\mu_X$, namely it is an 
$\widehat{F}$-coalgebra or,
equivalently, a $\lambda$-\emph{bialgebra} \cite{DBLP:conf/lics/TuriP97,DBLP:journals/tcs/Klin11}. Similarly for $\omega$,
$\Omega$ carries a $T$-algebra structure obtained by finality and
hence the final $F$-coalgebra $\omega$ can be lifted in order to
obtain the final $\widehat{F}$-coalgebra (see \cite[Prop. 4]{JSS12}).

This result holds for arbitrary categories and, in particular, we can reuse it for our setting: we only need an $\EM$-law on $\PMet$. Note that
\autoref{prop:nt-lifting} not only provides sufficient conditions for monad
liftings but also can be exploited to lift $\EM$-laws. Indeed the additional commutativity requirements
for $\EM$-laws trivially hold when all components are nonexpansive.

\begin{savetheorem}[Lifting of an $\EM$-law]{corollary}{cor:lifting-distr-law}
Let $F,G$ be weak pullback preserving endo\-functors on $\Set$ with well-behaved evaluation functions $\ev_F$, $\ev_G$ and $\lambda\colon FG \Rightarrow GF$ be an $\EM$-law. If the evaluation functions satisfy $\ev_G \circ G\ev_F \circ \lambda_{[0,\top]} \leq \ev_F \circ F\ev_G$ and compositionality holds for $FG$, then $\lambda$ is nonexpansive and hence $\overline{\lambda}\colon \LiftedFunctor{F}\,\LiftedFunctor{G} \Rightarrow \LiftedFunctor{G}\,\LiftedFunctor{F}$ is also an $\EM$-law.
\end{savetheorem}

\noindent We will now consider $\EM$-laws for nondeterministic
and probabilistic automata. In the first case, $T$ is the powerset
monad $\PowersetFinite$ and $F$ is the machine functor $M_2 = 2\times
\_^A$, while in the second case $T$ is the distribution monad
$\mathcal{D}$ and $F$ is the machine functor $M_{[0,1]} = [0,1]\times
\_^A$. Note however that while in the first case
\autoref{cor:lifting-distr-law} is directly applicable, this is not
true in the second case, since we need to deal with multifunctors.

\begin{savetheorem}[\EM-law for Nondeterministic Automata]{example}{exa:EMlaw:powfin}
Let $(\PowersetFinite,\eta,\mu)$ be the finite powerset monad from \autoref{exa:monad:powfin}. The $\EM$-law $\lambda \colon \PowersetFinite(2 \times \_^A) \Rightarrow 2 \times \PowersetFinite(\_)^A$ is defined, for any set $X$, as
\begin{align*}
	\lambda_X(S) = \big(o,\lambda a \in A.\set{s'(a) \mid (o',s') \in S}\big), \ \ \text{where} \ \ o = \begin{cases}1 & \exists s' \in X^A.(1,s') \in S\\0&\text{else}\end{cases} \,.
\end{align*}
This is exactly the one exploited for the standard powerset construction from automata theory~\cite{DBLP:journals/corr/abs-1302-1046}. Indeed, for a nondeterministic automaton $c \colon X \to 2 \times \PowersetFinite(X)^A$, the map $\mean{-} \circ \eta_X$ assigns to each state its accepted language.
\autoref{cor:lifting-distr-law} ensures that it is nonexpansive (see \appendixref\ for a detailed proof). 
\end{savetheorem}

\begin{savetheorem}[\EM-law for Probabilistic Automata]{example}{exa:EMlaw:distr}
Let $(\Distributions,\eta,\mu)$ be the distribution monad from \autoref{exa:monad:distr} and $M$ be the machine bifunctor from \autoref{exa:machinebifunctor}. There is a known~\cite{DBLP:journals/corr/abs-1302-1046} $\EM$-law $\lambda \colon \Distributions([0,1] \times \_^A) \Rightarrow [0,1] \times \Distributions^A$ given by the assignment
\begin{align*}
	\lambda_X(P) = \left(\sum_{r \in [0,1]} r \cdot P(r,X^A), \lambda a \in A.\lambda x \in X. \sum_{{s \in X^A,\, s(a)=x}}\!\!\!\!\!\!\!\!\!\!P([0,1],s)\right)
\end{align*}
Also this $\EM$-law is nonexpansive, as shown in \appendixref. 
\end{savetheorem}

\noindent Any $FT$-coalgebra $c\colon X \to FTX$ can always be regarded as an $\LiftedFunctor{F}\,\LiftedFunctor{T}$-coalgebra by
equipping $X$ with the discrete metric assigning $\top$ to non equal states (in this way, $c$ is trivially nonexpansive).
The consequence of the nonexpansiveness of the $\EM$-laws $\lambda$ is
the following: the ``generalized determinization'' procedure for nondeterministic and probabilistic automata
can now be lifted to pass from $\LiftedFunctor{F}\,\LiftedFunctor{T}$-coalgebras to 
$\widehat{\LiftedFunctor{F}}$-coalgebras in $\EM(\LiftedFunctor{T})$ by using the upper adjunction in the diagram below (analogously to
\cite{JSS12, JSS15}).

\begin{center}\begin{diagram}
	\matrix[matrix of math nodes, column sep=2cm, row sep=2cm] (m){
		\PMet & \EM(\LiftedFunctor{T})\\
		\Set & \EM(T)\\
	};
	
	\path (m-1-1) edge node[left]{$U$} (m-2-1);
	\path (m-1-2) edge node[right]{$V$} (m-2-2);
	\path (m-1-1) edge[bend left=20] node[above]{$L^{\LiftedFunctor{T}}$} (m-1-2);
	\path (m-1-2) edge[bend left=20] node[below]{$U^{\LiftedFunctor{T}}$} (m-1-1);
	
	\path (m-2-1) edge[bend left=20] node[above]{$L^T$} (m-2-2);
	\path (m-2-2) edge[bend left=20] node[below]{$U^T$} (m-2-1);
        \path (m-1-1) edge[loop left] node[left]{$\LiftedFunctor{F}$}
        (m-1-1);
        \path (m-2-1) edge[loop left,looseness=20] node[left]{$F$} 
        (m-2-1);
        \path (m-1-2) edge[loop right]
        node[right]{$\widehat{\LiftedFunctor{F}}$} 
        (m-1-2);
        \path (m-2-2) edge[loop right] node[right]{$\widehat{F}$} 
        (m-2-2);
\end{diagram}\end{center}

\noindent Since we can also lift the final $\LiftedFunctor{F}$-coalgebra to
$\EM(\LiftedFunctor{T})$, we can use it to define trace distance.
This procedure is detailed in the next section.

\subsection{Final Coalgebra for the Lifted Machine Functor}
If we fix the first component of the machine bifunctor $M$ on $\Set$
we obtain an endofunctor $M_B\colon \Set \to \Set$, $M_B(X) =B \times
\_^A$. It is known~\cite{AM86} that the final coalgebra for this functor is
$\kappa \colon B^{A^*} \to B \times (B^{A^*})^A$ with $\kappa(t) =
(t(\epsilon),\lambda a \in A.\lambda w \in A^*.t(aw))$. We employ an
analogous construction with our lifted machine bifunctor
$\LiftedFunctor{M}$ on $\PMet$, i.e. we fix a pseudometric space $(B,
d_B)$ of outputs and consider coalgebras of the functor
$\LiftedFunctor{M}_{(B,d_B)} := \LiftedFunctor{M}((B, d_B), \_)$. To
obtain the final coalgebra for this functor in $\PMet$, we use the
following result from \cite{BBKK14}.

\begin{proposition}[{\cite[Thm. 6.1]{BBKK14}}]
\label{prop:final-coalgebra}
Let $\LiftedFunctor{F}\colon\PMet\to\PMet$ be a lifting of a functor $F\colon \Set\to \Set$ which has a final coalgebra $\kappa\colon\Omega\to F\Omega$. For every ordinal $i$ we construct a pseudometric $d_i\colon\Omega\times\Omega\to\reals$ as follows: $d_0 := 0$ is the zero pseudometric, $d_{i+1} := \LiftedMetric{F}{d_i}\circ(\kappa\times\kappa)$ for all ordinals $i$ and $d_j = \sup_{i<j} d_i$ for all limit ordinals $j$.
This sequence converges for some ordinal $\theta$, i.e $d_\theta = \LiftedMetric{F}{d_\theta}\circ(\kappa\times\kappa)$. Moreover $\kappa\colon (\Omega,d_\theta) \nonexpansiveTo (F\Omega,\LiftedMetric{F}{d_\theta})$ is the final $\LiftedFunctor{F}$-coalgebra.
\end{proposition}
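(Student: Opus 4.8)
The plan is to realise the sequence $(d_i)_i$ as the transfinite iteration, from the bottom element, of a monotone operator on the (complete) lattice of pseudometrics on $\Omega$, and then to check that the resulting fixpoint makes $\kappa$ the final $\LiftedFunctor{F}$-coalgebra. The operator is $\Phi(d):=\LiftedMetric{F}{d}\circ(\kappa\times\kappa)$, where $\LiftedMetric{F}{d}$ is the pseudometric with $\LiftedFunctor{F}(\Omega,d)=(F\Omega,\LiftedMetric{F}{d})$; it is well defined because $\LiftedFunctor{F}$ is a functor into $\PMet$ and the pullback of a pseudometric along $\kappa$ is again one. The essential point is that $\Phi$ is monotone, and this needs nothing but functoriality of the lifting: if $d\le d'$ then $\id_\Omega\colon(\Omega,d')\nonexpansiveTo(\Omega,d)$ is nonexpansive, hence so is $\LiftedFunctor{F}\id_\Omega=\id_{F\Omega}\colon(F\Omega,\LiftedMetric{F}{d'})\nonexpansiveTo(F\Omega,\LiftedMetric{F}{d})$, that is, $\LiftedMetric{F}{d}\le\LiftedMetric{F}{d'}$; precomposition with $\kappa\times\kappa$ preserves this inequality.

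For convergence, $d_0=0$ is the bottom element, so $d_0\le d_1=\Phi(d_0)$, and a transfinite induction shows the whole sequence is weakly increasing: at successor stages one uses monotonicity of $\Phi$, and at a limit $j$ one uses that, for every $k<j$, $d_k\le d_{k+1}\le\Phi(d_j)$ (the first inequality by the induction hypothesis, the second by monotonicity since $d_k\le d_j$), so that $d_j=\sup_{k<j}d_k\le\Phi(d_j)=d_{j+1}$. Since all the $d_i$ live in the \emph{set} of pseudometrics on $\Omega$, a weakly increasing chain indexed by all ordinals cannot fail to stabilise, so there is an ordinal $\theta$ with $d_{\theta+1}=d_\theta$, i.e.\ $d_\theta=\LiftedMetric{F}{d_\theta}\circ(\kappa\times\kappa)$.

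It remains to identify $\kappa$ with the final $\LiftedFunctor{F}$-coalgebra. The fixpoint equation says exactly that $\kappa\colon(\Omega,d_\theta)\to\LiftedFunctor{F}(\Omega,d_\theta)$ is nonexpansive — indeed an isometry — hence a genuine $\LiftedFunctor{F}$-coalgebra. For finality, take any $\LiftedFunctor{F}$-coalgebra $g\colon(Y,e)\to\LiftedFunctor{F}(Y,e)=(FY,\LiftedMetric{F}{e})$. Applying the forgetful functor $U\colon\PMet\to\Set$ and using finality of $\kappa$ in $\Set$, one obtains a unique $F$-coalgebra morphism $h\colon Y\to\Omega$ with $\kappa\circ h=Fh\circ g$; uniqueness of a \emph{nonexpansive} coalgebra morphism into $(\Omega,d_\theta)$ is then automatic, since such a morphism is in particular a morphism of $\Set$-coalgebras. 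So the only remaining task is to show that $h$ is nonexpansive, i.e.\ $d_\theta\circ(h\times h)\le e$. I would prove $d_i\circ(h\times h)\le e$ for every ordinal $i$ by transfinite induction. The base case is $0\le e$, and at a limit one uses $d_j\circ(h\times h)=\sup_{i<j}\bigl(d_i\circ(h\times h)\bigr)\le e$ ($\sup$ being computed pointwise). For the successor step, $d_i\circ(h\times h)\le e$ means $h\colon(Y,e)\nonexpansiveTo(\Omega,d_i)$, hence $\LiftedFunctor{F}h=Fh\colon(FY,\LiftedMetric{F}{e})\nonexpansiveTo(F\Omega,\LiftedMetric{F}{d_i})$ is nonexpansive; combining this with $\kappa\circ h=Fh\circ g$ and with the nonexpansiveness of $g$ as a coalgebra in $\PMet$ (i.e.\ $\LiftedMetric{F}{e}\circ(g\times g)\le e$) gives
\[
d_{i+1}\circ(h\times h)=\LiftedMetric{F}{d_i}\circ(Fh\times Fh)\circ(g\times g)\le\LiftedMetric{F}{e}\circ(g\times g)\le e .
\]
Instantiating $i=\theta$ finishes the proof.

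The only step that is more than bookkeeping is the convergence claim in the second paragraph, which relies on the standard observation that a weakly increasing chain indexed by the ordinals cannot remain within a set without eventually becoming constant. Every other step is a disciplined use of functoriality of $\LiftedFunctor{F}$, the fixpoint equation, and the definitional fact that coalgebras in $\PMet$ have nonexpansive structure maps; in particular the lifting is used only through these abstract properties, so the argument is insensitive to whether $\LiftedFunctor{F}$ is the Kantorovich or the Wasserstein lifting.
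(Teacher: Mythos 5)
Your proof is correct and complete: monotonicity of $d\mapsto\LiftedMetric{F}{d}\circ(\kappa\times\kappa)$ extracted purely from functoriality of the lifting, stabilization of the increasing ordinal-indexed chain by a cardinality argument, and nonexpansiveness of the unique $\Set$-morphism $h$ by transfinite induction are exactly the right ingredients. Note that the paper itself gives no proof of this proposition (it is imported verbatim from \cite[Thm.~6.1]{BBKK14}), and your argument coincides with the standard final-chain proof used there, correctly working at the level of an arbitrary lifting rather than relying on properties specific to the Kantorovich or Wasserstein constructions.
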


\noindent It is hence enough to do fixed-point iteration for the functor $F$ on
the determinized state set $TX$ in order to obtain trace distance. The
lifted monad is ignored at this stage, but its lifting is of course
necessary to establish the Eilenberg-Moore category and its
adjunction.

We now consider our two example cases, where in both cases $F$ is the
machine functor $M_B$ (for two different choices of $B$):

\begin{savetheorem}[Final Coalgebra Pseudometric]{example}{exa:final-coalgebra}
Let $M$ be the machine bifunctor. 
\begin{enumerate}
\item We start with nondeterministic automata where the output set is $B = 2$ and we
  use the discrete metric $d_2$ as distance on $2$ as in \autoref{exa:m2}. As maximal distance we take $\top=1$ and as evaluation function we use $\ev_M(o,s) = c \cdot \max_{a \in A} s(a)$ for $0<c <1$. 
  
For any pseudometric $d$ on $2^{A^*}$ -- the carrier of the final $M_2$-coalgebra -- we know that for elements $(o_1,s_1), (o_2, s_2) \in 2 \times (2^{A^*})^A$ we have the Wasserstein pseudometric $\Wasserstein{F}{d}\big((o_1,s_1), (o_2, s_2)\big) = \max\set{d_2(o_1,o_2), c \cdot \max_{a \in A}d\big(s_1(a), s_2(a)\big)}$. Thus the fixed-point equation from \autoref{prop:final-coalgebra} is, for $L_1,L_2 \in 2^{A^*}$,
\begin{align*}
	d(L_1,L_2) = \max\set{d_2\big(L_1(\epsilon), L_2(\epsilon)\big), c \cdot \max_{a \in A} d\big(\lambda w.L_1(aw), \lambda w.L_2(aw)\big)}
\end{align*}
Now because $d_2$ is the discrete metric with $d_2(0,1) = 1$ we see that $d_{2^{A^*}}$ as defined below is indeed the least fixed-point of this equation and thus $(2^{A^*}, d_{2^{A^*}})$ is the carrier of the final $\LiftedFunctor{M_2}$-coalgebra.
  \begin{align*}
    d_{2^{A^*}} \colon 2^{A^*} \times 2^{A^*} \to [0,1], \quad
    d_{2^{A^*}}(L_1,L_2) = c^{\inf\set{n \in N \mid \exists w \in
        A^n.L_1(w) \not = L_2(w)}}\,.
  \end{align*} 
  
  A determinized coalgebra has as carrier set sets of states
  $\mathcal{P}(X)$. Each of these sets is mapped to the language that
  it accepts and the distance between two languages $L_1,L_2\colon
  A^*\to 2$ can be determined by looking for a word $w$ of minimal length which
  is contained in one and not in the other. Then, the distance is
  computed as $c^{|w|}$. This corresponds to the standard ultrametric
  on words.
  
\item Next we consider probabilistic automata where $B = [0,1]$ equipped with 
 the standard Euclidean metric $d_e$. 

 Furthermore the remaining parameters are set as follows: let $\top=1$
 and the evaluation function is
 $\ev_M(o,s) = c_1o + {c_2}{|A|}^{-1} \sum_{a \in A} s(a)$ for
 $c_1,c_2 \in (0,1)$ such that $c_1 + c_2 \leq 1$ as in
 \autoref{exa:machinebifunctor}. This time, the machine functor must
 be lifted as a bifunctor in order to obtain the appropriate distance
 (cf.\ the discussion before \autoref{exa:product-bifunctor}).
  
  For any pseudometric $d$ on $[0,1]^{A^*}$ we know that for $(r_1,s_1), (r_2, s_2) \in [0,1] \times ([0,1]^{A^*})^A$ we have $\Wasserstein{F}{d}((r_1,s_1), (r_2, s_2)) = c_1|r_1-r_2| +  \frac{c_2}{|A|}\cdot\sum_{a \in A}d(s_1(a), s_2(a))$. Thus the fixed-point equation from \autoref{prop:final-coalgebra} is, for $p_1,p_2 \in [0,1]^{A^*}$:
\begin{align*}
	d(p_1,p_2) = c_1|p_1(\epsilon)- p_2(\epsilon)| + \frac{c_2}{|A|} \cdot \sum_{a \in A} d\Big(\lambda w.p_1(aw), \lambda w.p_2(aw)\Big)
\end{align*}
It is again easy to see that $d_{[0,1]^{A^*}} \colon [0,1]^{A^*} \times [0,1]^{A^*}
  \to [0,1]$ as presented below is the least fixed-point of this equation and therefore $([0,1]^{A^*}, d_{[0,1]^{A^*}})$ the carrier of the final $\LiftedFunctor{M}_{([0,1],d_e)}$-coalgebra.
  \begin{align*}
    d_{[0,1]^{A^*}}(p_1,p_2) = c_1 \cdot \sum_{w \in A^*} \left(\frac{c_2}{|A|}\right)^{|w|} \left|p_1(w) - p_2(w)\right|\,.
  \end{align*}
  Here, a determinized coalgebra has as carrier distributions on
  states $\mathcal{D}(X)$. Each such distribution is mapped to a
  function $p\colon A^*\to [0,1]$ assigning numerical values to words.
  Then the distance, which can be thought of as a form of total
  variation distance with discount, is computed by the above formula.

  If instead of working in the interval $[0,1]$ we use $[0,\infty]$
  with $\top=\infty$, we can drop the conditions $c_1,c_2 < 1$ and
  $c_1+c_2\le 1$. In this case we may set $c_2 := |A|$ and $c_1 :=
  1/2$ and then the above distance is equal to the total variation
  distance, i.e.,
    \begin{align*}
    d_{[0,\infty]^{A^*}}(p_1,p_2) = \frac{1}{2} \cdot \sum_{w \in A^*} 
    \left|p_1(w) - p_2(w)\right| \,.
  \end{align*}

\end{enumerate}
\end{savetheorem}

\section{Conclusion, Related and Future Work}
\label{sec:conclusion}
In the last years, an impressive amount of papers has studied behavioral distances for both probabilistic and nondeterministic systems~(see, e.g.,~\cite{GJS90,DGJP04,vBW05,bblm:total-variation-markov,afs:linear-branching-metrics-quant,afs:linear-branching-metrics,flt:quantitative-spectrum}). The necessity of a general understanding of such metrics is not a mere intellectual whim but it is perceived also by researchers exploiting distances for differential privacy and quantitative information flow (see for instance~\cite{DBLP:conf/concur/ChatzikokolakisGPX14}). As far as we know, the first use of coalgebras for this purpose dates back to~\cite{vBW05}, where the authors consider systems and distance for a fixed endofunctor on $\PMet$. 
In~\cite{BBKK14}, we introduced the Kantorovich and Wasserstein approaches as a general way to define ``canonical liftings'' to $\PMet$ and behavioral distances by finality. These are usually branching-time, while many properties of interest for applications (see again~\cite{DBLP:conf/concur/ChatzikokolakisGPX14}) are usually expressed by means of distances on set of traces. In this paper, we have shown that the work developed in~\cite{BBKK14} can be fruitfully combined with~\cite{JSS15} to obtain various trace distances.

Among the several trace distances introduced in literature, it is worth to mention~\cite{bblm:total-variation-markov,afs:linear-branching-metrics-quant,afs:linear-branching-metrics,flt:quantitative-spectrum}. Similar to the trace distance we obtain
in \autoref{exa:final-coalgebra} for probabilistic automata is the one introduced in~\cite{bblm:total-variation-markov} for Semi-Markov chains
with residence time. In
\cite{afs:linear-branching-metrics-quant,afs:linear-branching-metrics}, both branching-time and linear-time distances are introduced for
\emph{metric transition systems}, namely Kripke structures where states are
associated with elements of a fixed (pseudo-)metric space $M$, that
would correspond to coalgebras of the form $X\to M\times
\mathcal{P}(X)$. In~\cite{BBKK14}, we have shown an example capturing branching-time distance for metric transition systems, 
but for linear distances we require a distributive law of the
form $\mathcal{P}(M\times \_)\Rightarrow M\times \mathcal{P}(\_)$, for
which we would need at least $M$ carrying an algebra for the monad
$\Powerset$. We also plan to investigate trace metrics in a Kleisli setting
\cite{HJS07}, where it might be easier to incorporate such examples.

There are two other direct consequences of our work that we did not explain in the main text, but that are important properties of the distances that we obtain (and, indeed, are mentioned in~\cite{DBLP:conf/concur/ChatzikokolakisGPX14} amongst the desiderata for ``good'' metrics).
First, the behavioral branching-distance for $\LiftedFunctor{F}\,\LiftedFunctor{T}$ provides an upper bound to the linear-distance $\LiftedFunctor{F}$, analogously to the well-known fact that bisimilarity implies trace equivalence. To see this, it is enough to observe that there is a functor from the category of $\LiftedFunctor{F}\,\LiftedFunctor{T}$-coalgebras to the one of $\LiftedFunctor{F}$-coalgebras mapping $c\colon X \to \LiftedFunctor{F}\,\LiftedFunctor{T}X$ into $c^{\sharp}\colon \LiftedFunctor{T}X \to \LiftedFunctor{F}\,\LiftedFunctor{T}X$.

Second, since the final map $\mean{-}$ is a morphism in $\EM(\LiftedFunctor{T})$, the behavioral distance for $\LiftedFunctor{F}$ is nonexpansive w.r.t. the operators of the monad $\LiftedFunctor{T}$. Nonexpansiveness with respect to some operators is a desirable property which has been studied, for instance in~\cite{DGJP04}, as a generalization of the notion of being a congruence for behavioral equivalence. Several researchers are now studying syntactic rule formats ensuring this and other sorts of compositionality (see e.g.~\cite{DBLP:journals/corr/GeblerT13} and the references therein) and we believe that our \autoref{cor:lifting-distr-law} may provide some helpful insights.

In this perspective, however, our results are still unsatisfactory if compared to what happens in the case of behavioral equivalences. From a fibrational point of view, one has a \emph{canonical lifting} to $\cat{Rel}$ (the category of relations and relation preserving morphisms) such that compositionality holds on the nose and distributive laws always lift \cite[Exercise 4.4.6]{Jacobs:coalg}. The forgetful functor $U\colon \PMet \to \Set$ is also a fibration~\cite{BBKK14}, but Kantorovich and Wasserstein liftings are not always so well-behaved. Fibrations might be useful also to guarantee soundness of up-to techniques~\cite{DBLP:journals/corr/BonchiPPR14} for behavioral distances that, hopefully, will lead to more efficient proofs and algorithms. 

Another interesting future work would be to show that Kantorovich and
Wasserstein liftings arise from some universal properties, i.e., that
they are the smallest and largest metric in some continuum of metrics
with certain properties. Here we would like to draw inspiration from
\cite{vB05} which characterizes the Giry monad via a universal
property on monad morphisms.

Finally, we would like to have an abstract understanding of the
Kan\-to\-ro\-vich-Ru\-bin\-stein duality.  Preliminary attempts suggest 
that this is very difficult: indeed the proof for the probabilistic
case relies on specific properties of distributions.
 
\bibliographystyle{alpha} 
\bibliography{calco15}
\appendix
\clearpage
\makeatletter{}
\setcounter{section}{15} 
\section{Proofs}
\label{sec:proofs}
Here we provide proofs for the soundness of our definitions (where needed), the stated theorems, propositions, lemmas, examples and also for all claims made in the in-between texts. If a theorem environment starts with the symbol $\circlearrowleft$ it has been stated in the main text and is repeated here for convenience of the reader (using the numbering from the main text). Otherwise it is a new statement which clarifies/justifies claims made in the main text and its number starts with \emph{P}.

\stepcounter{subsection}
\renewcommand{\sectionautorefname}{Appendix}
\renewcommand{\subsectionautorefname}{Appendix}

\subsection{\nameref{sec:preliminaries}}
For the upcoming proofs we will often use the following, alternative characterization of W3. 
~\medskip\\\begin{minipage}{.68\textwidth}
\begin{lemmaApx}[Weak Pullback Characterization of W3]
\label{lem:W3-weak-pb}
Let $F$ be an endofunctor on $\Set$ with evaluation function $\ev_F$ and $i \colon \set{0} \hookrightarrow [0,\top]$ be the inclusion function. For any set $X$ we denote the unique arrow into $\set{0}$ by $!_X\colon X \to \set{0}$. Then $\ev_F$ satisfies $\ev_F^{-1}[\set{0}] = Fi[F\set{0}]$ if and only if the diagram on the right is a weak pullback.
\end{lemmaApx}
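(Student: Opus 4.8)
The plan is to unwind the weak-pullback property of the square on the right --- which has $F\set{0}$ in the pullback corner, the terminal set $\set{0}$ adjacent to it, $F\reals$ and $\reals$ in the remaining corners, legs $Fi$ and $i$, base $\ev_F$, and fourth edge the terminal arrow $!_{F\set{0}}$ --- and to match its two constituents against the two set inclusions that together make up condition~W3. The first observation is that this square \emph{commutes} if and only if $\ev_F\circ Fi\colon F\set{0}\to\reals$ is the constant map with value $0$, which is literally the inclusion $Fi[F\set{0}]\subseteq\ev_F^{-1}[\set{0}]$.

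For the second constituent I would spell out the weak universal property explicitly. A competing cone over $F\reals\xrightarrow{\ev_F}\reals\xleftarrow{i}\set{0}$ is nothing but a set $Q$ together with an arrow $f\colon Q\to F\reals$ such that $\ev_F\circ f$ is constantly $0$, i.e.\ $f[Q]\subseteq\ev_F^{-1}[\set{0}]$ (the second leg $Q\to\set{0}$ being forced by terminality of $\set{0}$), and a mediating morphism is exactly a map $h\colon Q\to F\set{0}$ with $Fi\circ h=f$ (the triangle into $\set{0}$ commuting automatically). So ``the square is a weak pullback'' amounts to: it commutes, and every $f\colon Q\to F\reals$ whose image lies inside $\ev_F^{-1}[\set{0}]$ factors through $Fi$.

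Given this reformulation the equivalence is short. If the square is a weak pullback, commutativity gives $Fi[F\set{0}]\subseteq\ev_F^{-1}[\set{0}]$, and instantiating the factorisation property at a one-point set $Q$ pointing to an arbitrary $t\in\ev_F^{-1}[\set{0}]$ yields a preimage of $t$ under $Fi$, hence the reverse inclusion; together these give W3. Conversely, assuming W3, commutativity follows from $Fi[F\set{0}]\subseteq\ev_F^{-1}[\set{0}]$, and for any cone $f\colon Q\to F\reals$ we have $f[Q]\subseteq\ev_F^{-1}[\set{0}]=Fi[F\set{0}]$, so choosing for each $x\in Q$ some $h(x)\in F\set{0}$ with $Fi(h(x))=f(x)$ produces a mediating morphism. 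There is no genuine obstacle here; the only points deserving care are that in the forward direction it suffices to test the universal property on singletons, and that in the backward direction the pointwise choice of preimages --- a use of the axiom of choice --- is legitimate precisely because weak pullbacks demand no uniqueness of the mediator.
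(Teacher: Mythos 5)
Your proposal is correct and follows essentially the same route as the paper's proof: identify commutativity of the square with the inclusion $Fi[F\set{0}]\subseteq \ev_F^{-1}[\set{0}]$, and extract the reverse inclusion from the weak universal property by testing against a cone landing in $\ev_F^{-1}[\set{0}]$ (the paper uses the single cone given by the inclusion of $\ev_F^{-1}[\set{0}]$ into $F[0,\top]$ where you use singletons, and it makes the same pointwise, non-unique choice of preimages in the converse direction). No gaps.
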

\end{minipage}\hfill%
\begin{minipage}{.28\textwidth}
		\begin{diagram}
			\matrix(m)[matrix of math nodes, column sep=30pt, row sep=30pt,ampersand replacement=\&]{
				F\set{0} \& \set{0}\\
				F[0,\top] \& {[0,\top]}\\
			};
			\draw[->] (m-1-1) edge node[above]{$!_{F\set{0}}$} (m-1-2);
			\draw[->] (m-1-1) edge node[left]{$Fi$} (m-2-1);
			\draw[right hook->] (m-1-2) edge node[right]{$i$} (m-2-2);
			\draw[->] (m-2-1) edge node[above]{$\ev_F$} (m-2-2);
		\end{diagram}\end{minipage}

\begin{proof}
Commutativity of the diagram is equivalent to $\ev_F^{-1}[\set{0}] \supseteq  Fi[F\set{0}]$. Given a set $X$ and a function $f \colon X \to F[0,\top]$ as depicted below, we conclude again by commutativity ($i \circ !_X = \ev_F \circ f$) that $f(x) \in \ev_F^{-1}[\set{0}]$ for all $x \in X$.
\begin{center}
		\begin{diagram}
			\matrix(m)[matrix of math nodes, column sep=30pt, row sep=15pt]{
				X\\
				&F\set{0} & \set{0}\\
				&F[0,\top] & {[0,\top]}\\
			};
			\draw[->] (m-1-1) edge[bend left] node[above] {$!_X$} (m-2-3);
			\draw[->] (m-1-1) edge[bend right] node[left] {$f$} (m-3-2);
			\draw[->] (m-1-1) edge[dashed] node[above] {$\phi$} (m-2-2);
			\draw[->] (m-2-2) edge node[above]{$!_{F\set{0}}$} (m-2-3);
			\draw[->] (m-2-2) edge node[left]{$Fi$} (m-3-2);
			\draw[right hook->] (m-2-3) edge node[right]{$i$} (m-3-3);
			\draw[->] (m-3-2) edge node[above]{$\ev_F$} (m-3-3);
		\end{diagram}\end{center}
Now if $\ev_F^{-1}[\set{0}] \subseteq  Fi[F\set{0}]$ then for $f(x) \in \ev_F^{-1}[\set{0}]$ we can choose a (not necessarily unique) $x_0 \in F\set{0}$ such that $f(x) = Fi(x_0)$. If we define $\phi\colon X \to F\set{0}$ by $\phi(x)  = x_0$ then clearly $\phi$ makes the above diagram commute and thus we have a weak pullback.

Conversely if the diagram is a weak pullback consider the set $X = \ev_F^{-1}[\set{0}]$ and the function $f \colon \ev_F^{-1}[\set{0}] \hookrightarrow F[0,\top], f(x) = x$. Now for any $x \in \ev_F^{-1}[\set{0}]$ we have $Fi (\phi(x)) = (Fi \circ \phi)(x) = f(x) = x$, hence since $\phi(x) \in F\set{0}$ we have $x \in Fi[F\set{0}]$.
\end{proof}

\restatewithstyle{exa:distributions}{restatedefinition}
\begin{proof}
Weak pullback preservation, well-behavedness and the duality was already presented in \cite{BBKK14}. Here we just quickly check that indeed the infimum is a minimum: Let $\supp(P_1) \cup \supp(P_2) = \set{s_1,\dots,s_n}$ be the union of the finite supports of $P_1$ and $P_2$. Then define the following finitely many real numbers $p_{1i} := P_1(s_i)$, $p_{2j} := P_2(s_j)$, $d_{ij}:= d(s_i, s_j)$. Then the distance of $P_1$ and $P_2$ can be equivalently expressed as the following LP:
\begin{alignat*}{2}
\text{minimize }\quad  & \sum_{1 \leq i,j \leq n} d_{ij} \cdot x_{ij}   \\
\text{subject to }\quad %
& \sum_{1 \leq j \leq n} x_{ij} = p_{1i}&,\ & 1 \leq i \leq n\\
& \sum_{1 \leq i \leq n} x_{ij} = p_{2j}&,\ & 1 \leq j \leq n\\
& 0 \leq x_{ij} \leq 1, &\ & 1 \leq i,j \leq n
\end{alignat*}
whose feasible region is nonempty ($x_{ij}:=p_{1i} \cdot p_{2j}$ is in it) and bounded. Thus we indeed get an optimal solution $x_{ij}^*$ and can define the optimal coupling as $P^*(s_i, s_j) := x_{ij}^*$.
\end{proof}

\restatewithstyle{exa:product-bifunctor}{restatedefinition}
\begin{proof}
We adapt the proof given in \cite[Exa. 5.1]{BBKK14} to also include the discounted maximum (all other cases were covered there). First we show well-behavedness.
\begin{enumerate}
\item Let $f_i,g_i\colon X_i \to [0,\top]$ with $f_i \leq g_i$ be given. Then we have 
\begin{align*}
	\EvaluationFunctor{F}(f_1,f_2) = \max(c_1f_1,c_2f_2) \leq \max(c_1g_1,c_2g_2) = \EvaluationFunctor{F}(g_1,g_2)\,.
\end{align*}
\item Let $t:=(x_{11},x_{21},x_{12},x_{22})\in F([0,\top]^2,[0,\top]^2) = [0,\top]^2\times [0,\top]^2$. We have to show the inequality $d_e\big(\EvaluationFunctor{F}(\pi_1,\pi_1)(t), \EvaluationFunctor{F}(\pi_2, \pi_2)(t)) \leq \EvaluationFunctor{F}(d_e,d_e)(t)$. We observe that $\EvaluationFunctor{F}(d_e,d_e)(t) = \ev_F(d_e(x_{11},x_{21}), d_e(x_{12},x_{22}))$ and if we define $z_i = \ev_F(x_{i1},x_{i2}) = \max\set{c_1x_{i1}, c_2x_{i2}}$ then $d_e\big(\EvaluationFunctor{F}(\pi_1,\pi_1)(t), \EvaluationFunctor{F}(\pi_2, \pi_2)(t)) = d_e(z_1,z_2)$. We thus have to show the inequality 
\begin{equation}
	d_e\left(z_1,z_2\right) \leq \ev_F(d_e(x_{11},x_{21}), d_e(x_{12},x_{22}))\,.\label{eq:condtwo:product}
\end{equation}
If $z_1=z_2$ this is obviously true because $d_e(z_1, z_2) = 0$ and the rhs is non-negative. We now assume $z_1 > z_2$ (the other case is symmetrical). For $\infty = z_1 > z_2$ the inequality holds because then $x_{11} = \infty$ or $x_{12} = \infty$ and $x_{21},x_{22} < \infty$ (otherwise we would have $z_2 = \infty$) so both lhs and rhs are $\infty$. Thus we can now restrict to $\infty > z_1 > z_2$ where necessarily also $x_{11}, x_{12}, x_{21}, x_{22} < \infty$ (otherwise we would have $z_1 = \infty$ or $z_2=\infty$). According to \cite[Lemma P2.1]{BBKK14}, the inequality~\eqref{eq:condtwo:product} is equivalent to showing the two inequalities
\begin{align*}
	z_1 &\leq z_2 + \ev_F\big(d_e(x_{11},x_{21}), d_e(x_{12},x_{22})\big), \quad \text{and} \\ \ z_2 &\leq z_1 + \ev_F\big(d_e(x_{11},x_{21}), d_e(x_{12},x_{22})\big)\,.
\end{align*}
By our assumption ($\infty > z_1>z_2$) the second of these inequalities is satisfied, so we just have to show the first. If $z_1 = c_1x_{11}$ we have 
	\begin{align*}
		z_2 + \max\set{c_1d_e(x_{11}, x_{21}), c_2 d_e(x_{12},x_{22})} &\geq z_2 + c_1d_e(x_{11},x_{21}) = z_2 + c_1|x_{11}-x_{21}| \\
		&\geq z_2 + c_1(x_{11}-x_{21})= z_2 + c_1x_{11} - c_1x_{21} \\
		&= z_2 + z_1 - c_1x_{21} = z_1 + (z_2-c_1x_{21}) \geq z_1
		\end{align*}
	because $z_2 = \max\set{c_1x_{21},c_2x_{22}} > c_1x_{21}$ and therefore $(z_2-c_1x_{21}) \geq 0$. The same line of argument can be applied if $z_1 = c_2x_{12}$.
\item $F(i,i)[F(\set{0},\set{0})] = (i\times i)[\set{0} \times \set{0}] = \set{(0,0)}$ and $\ev_F^{-1}[\set{0}] = \set{(0,0)}$.
\end{enumerate}
We now prove that the Kantorovich-Rubinstein duality holds and simultaneously that the supremum (in the Kantorovich pseudometric) is a maximum and the infimum (of the Wasserstein pseudometric) is a minimum. Let $(X_1,d_1)$, $(X_1,d_2)$ be pseudometric spaces and let $t_i = (x_{i1},x_{i2}) \in F(X_1,X_2) = X_1 \times X_2$ be given. Their unique coupling is $t := ((x_{11},x_{21}),(x_{12},x_{2})) \in \Gamma_F(t_1,t_2)$ and we have $\EvaluationFunctor{F}(d_1,d_2)(t) = \max\{c_1d_1(x_1,y_1),c_2d_2(x_2,y_2)\}$. We define $f_i:=d_i(x_{1i},\_)$, which are nonexpansive due to \cite[Lemma 2.3]{BBKK14}. Then we clearly have $f_i(x_{1i}) = 0$ and moreover
\begin{align*}
	d_e\big(\EvaluationFunctor{F}(f_1,f_2)(t_1),\EvaluationFunctor{F}(f_1,f_2)(t_2)\big) &= d_e\left(\ev_F \big(f_1(x_{11}),f_2(x_{12})\big),\ev_F \big(f_1(x_{21}),f_2(x_{22})\big)\right)\\
	& = d_e\big(0, \max\set{c_1d_1(x_{11},x_{21}), c_2d_2(x_{12}, x_{22})}\big) \\
	&= \max\set{c_1d_1(x_{11},x_{21}), c_2d_2(x_{12}, x_{22})} = \EvaluationFunctor{F}(d_1,d_2)(t)
\end{align*}
Due to \cite[Proposition P.5.7]{BBKK14} we can now conclude that duality holds and both supremum and infimum are attained and equal to the above maximum.
\end{proof}

\subsection{\nameref{sec:compositionality}}
\subsubsection{Compositionality for Endofunctors}
We first collect a few simple observations that we will use in the upcoming proofs.
\begin{lemmaApx}
\label{lem:comp1}
Let $F,G$ be endofunctors on $\Set$ with evaluation functions $\ev_F, \ev_G$ and $a:= \langle G\pi_1, G\pi_2\rangle$ (i.e. the unique mediating arrow into the product) and $(X,d)$ an arbitrary pseudometric space. Then the following holds.
\begin{enumerate}
	\item $\EvaluationFunctor{G}d \ge \Wasserstein{G}{d} \circ a \ge \Kantorovich{G}{d} \circ a$
	\item $\forall t_1,t_2 \in FGX: \quad t \in \Couplings{FG}(t_1,t_2) \implies Fa(t) \in \Couplings{F}(t_1,t_2)$.
	\item If $F$ and $G$ preserve weak pullbacks then so does $FG$.
	\item For any $f \in \Set/[0,\top]$ we have $\EvaluationFunctor{FG}f = \EvaluationFunctor{F}(\EvaluationFunctor{G}f)$.
\end{enumerate}
\end{lemmaApx}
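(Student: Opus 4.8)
The plan is to peel off the four items in an order that lets each reuse the previous ones and the material of \autoref{sec:compositionality}; I would start with item~4, since it is needed even to make sense of the expressions appearing in the other items. None of the four requires real work beyond unwinding definitions, so I do not expect a genuine obstacle, but I will flag below the two small observations that do the actual lifting.

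For item~4 I would just unwind the definition $\ev_{FG} := \ev_F \circ F\ev_G$ together with functoriality of $F$: for $f\colon X\to[0,\top]$ we have $\EvaluationFunctor{FG}f = \ev_{FG}\circ FGf = \ev_F\circ F\ev_G\circ FGf = \ev_F\circ F(\ev_G\circ Gf) = \ev_F\circ F(\EvaluationFunctor{G}f) = \EvaluationFunctor{F}(\EvaluationFunctor{G}f)$. For item~2 I would write $\pi_i\colon X^2\to X$ for the product projections and $p_i\colon GX\times GX\to GX$ for those of $GX\times GX$, so that by the universal property $p_i\circ a = G\pi_i$; then for $t\in\Couplings{FG}(t_1,t_2)$, i.e.\ $FG\pi_i(t)=t_i$, one computes $Fp_i(Fa(t)) = F(p_i\circ a)(t) = F(G\pi_i)(t) = FG\pi_i(t) = t_i$, which says precisely that $Fa(t)\in\Couplings{F}(t_1,t_2)$. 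For item~3 I would invoke the standard closure of weak-pullback preservation under composition: the $FG$-image of a weak pullback square in $\Set$ is the $F$-image of its $G$-image, which is a weak pullback because $G$ preserves weak pullbacks, and the $F$-image of that is again a weak pullback because $F$ does.

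The only item with any content is item~1. Its right-hand inequality $\Wasserstein{G}{d}\circ a \ge \Kantorovich{G}{d}\circ a$ is immediate, by post-composing the pointwise inequality $\Kantorovich{G}{d}\le\Wasserstein{G}{d}$ recalled in \autoref{sec:preliminaries} with $a$. For the left-hand one I would fix $u\in G(X^2)$, set $t_i := G\pi_i(u)$ so that $a(u) = (t_1,t_2)$, and observe that $u$ is trivially a $G$-coupling of $t_1$ and $t_2$; hence $\Wasserstein{G}{d}(a(u))$, being an infimum of $\EvaluationFunctor{G}d$ over $\Couplings{G}(t_1,t_2)\ni u$, is at most $\EvaluationFunctor{G}d(u)$.

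The ``hard part'' is essentially nil: the two things worth keeping in mind are the observation used in item~1 that every element of $G(X^2)$ is a $G$-coupling of its own two projections, and the bookkeeping identity $p_i\circ a = G\pi_i$ used in item~2; everything else is pure unfolding of definitions and functoriality.
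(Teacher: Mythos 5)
Your proposal is correct and follows essentially the same route as the paper's proof: item~1 via the observation that every $u\in G(X^2)$ is a $G$-coupling of its own projections together with the known inequality $\Kantorovich{G}{d}\le\Wasserstein{G}{d}$, item~2 via the identity $\pi_i\circ a = G\pi_i$ and functoriality, and items~3 and~4 by direct unfolding. No gaps.
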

\begin{proof}
We first of all observe that $a$ is the unique mediating arrow into the product as indicated in the following diagram.
\begin{center}\begin{diagram}
	\matrix(m)[matrix of math nodes, column sep=80pt, row sep=40pt]{
		& G(X \times X)\\
		GX & GX \times GX & GX\\
	};

	\draw[->] (m-1-2) edge[bend right] node[left,xshift=-5pt]{$G\pi_1$} (m-2-1);
	\draw[->] (m-1-2) edge[bend left] node[right,xshift=5pt]{$G\pi_2$} (m-2-3);
	
	\draw[->] (m-2-2) edge node[below]{$\pi_1$} (m-2-1);
	\draw[->] (m-2-2) edge node[below]{$\pi_2$} (m-2-3);
	
	\draw[->] (m-1-2) edge node[right]{$a = \tuple{G\pi_1,G\pi_2}$} (m-2-2);
\end{diagram}\end{center}
\begin{enumerate}
\item Let $s \in G(X\times X)$ and define $s_i:=G\pi_i^X(s) = \pi_i^{GX} \circ a(s)$. Then by definition $s \in \Couplings{G}(s_1,s_2)$ and we conclude $\EvaluationFunctor{G}d(s) \geq \inf\{\EvaluationFunctor Gd(s') \mid s' \in \Couplings{G}(s_1,s_2)\} = \Wasserstein{G}{d}(s_1,s_2) = \Wasserstein{G}{d}(\pi_1^{GX} \circ a(s),\pi_2^{GX} \circ a(s)) = \Wasserstein{G}{d} \circ a (s)$. Since we always have $\Wasserstein{G}{d} \geq \Kantorovich{G}{d}$ as shown in \cite{BBKK14}, the statement follows.
\item $F\pi_i^{GX} (Fa(t)) = F(\pi_i^{GX} \circ a)(t) = F(G\pi_i^X)(t) = FG\pi_i^X = t_i$. 
\item This is indeed clear by definition.
	\item Let $f\colon X \to [0,\top]$, then $\EvaluationFunctor{FG}f = \ev_{FG} \circ FGf = \ev_F \circ F\ev_G \circ FGf = \ev_F \circ F(\ev_G \circ Gf) = \EvaluationFunctor{F}(\EvaluationFunctor{G}f)$.\qedhere\end{enumerate}
\end{proof}

\begin{lemmaApx}
\label{lem:comp:evfct}
Let $F$,$G$ be functors with evaluation functions $\ev_F$ and $\ev_G$ and define $\ev_{FG} := \ev_F \circ F\ev_G$. Then the following holds.
\begin{enumerate}
\item If $\EvaluationFunctor{F}$ and $\EvaluationFunctor{G}$ are monotone (Condition W1), then so is $\EvaluationFunctor{FG}$.
	\item If $G$ preserves weak pullbacks, $\ev_G$ is well-behaved and $\EvaluationFunctor{F}$ is monotone then $\ev_{FG}$ satisfies Condition W2 of well-behavedness.
	\item If $F$ preserves weak pullbacks and $\ev_F,\ev_G$ satisfy Condition W3 of well-behavedness, then also $\ev_{FG}$ satisfies Condition W3 of well-behavedness.	
\end{enumerate}
\end{lemmaApx}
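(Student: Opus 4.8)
The plan is to handle W1, W2 and W3 in turn, reducing each to the corresponding property of $F$ and of $G$; the main tool is \autoref{lem:comp1}, above all its part~(4), $\EvaluationFunctor{FG}f=\EvaluationFunctor{F}(\EvaluationFunctor{G}f)$. Part~(1) is then immediate: if $f\le g\colon X\to\reals$, then $\EvaluationFunctor{G}f\le\EvaluationFunctor{G}g$ by monotonicity of $\EvaluationFunctor{G}$, hence $\EvaluationFunctor{F}(\EvaluationFunctor{G}f)\le\EvaluationFunctor{F}(\EvaluationFunctor{G}g)$ by monotonicity of $\EvaluationFunctor{F}$, which is $\EvaluationFunctor{FG}f\le\EvaluationFunctor{FG}g$ by \autoref{lem:comp1}(4).

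For part~(2) I would first restate W2 in a convenient form: for any functor $H$ with evaluation function, W2 for $\ev_H$ is equivalent to nonexpansiveness of $\ev_H\colon(H\reals,\Wasserstein{H}{d_e})\nonexpansiveTo(\reals,d_e)$, because every $t\in H(\reals^2)$ is a coupling of its projections $H\pi_1(t),H\pi_2(t)$ and $\Wasserstein{H}{d_e}$ is the infimum of $\EvaluationFunctor{H}d_e$ over couplings. So the goal is that $\ev_{FG}=\ev_F\circ F\ev_G$ is nonexpansive on $(FG\reals,\Wasserstein{FG}{d_e})$. The snag is that W2 for $F$ is \emph{not} assumed, so $\ev_F$ need not be Wasserstein-nonexpansive and we cannot merely compose nonexpansive maps. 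I would detour through the Kantorovich lifting, which is always a functor on $\PMet$ and for which $\ev_H\colon(H\reals,\Kantorovich{H}{d_e})\nonexpansiveTo(\reals,d_e)$ is \emph{always} nonexpansive --- immediate from the definition of $\Kantorovich{H}{d_e}$ by plugging in $f=\id_\reals$, using $\EvaluationFunctor{H}\id_\reals=\ev_H$. Applying the Kantorovich lifting of $F$ to the nonexpansive $\ev_G\colon(G\reals,\Kantorovich{G}{d_e})\nonexpansiveTo(\reals,d_e)$ gives $F\ev_G$ nonexpansive from $(FG\reals,\Kantorovich{F}{(\Kantorovich{G}{d_e})})$ to $(F\reals,\Kantorovich{F}{d_e})$, and postcomposing with $\ev_F$ gives $\ev_{FG}$ nonexpansive from $(FG\reals,\Kantorovich{F}{(\Kantorovich{G}{d_e})})$ to $(\reals,d_e)$. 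It then remains to prove the sandwich $\Kantorovich{F}{(\Kantorovich{G}{d_e})}\le\Wasserstein{FG}{d_e}$: with $a=\langle G\pi_1,G\pi_2\rangle$ as in \autoref{lem:comp1}, for a coupling $t\in\Couplings{FG}(t_1,t_2)$ one has $Fa(t)\in\Couplings{F}(t_1,t_2)$ by \autoref{lem:comp1}(2), hence
\begin{align*}
  \Kantorovich{F}{(\Kantorovich{G}{d_e})}(t_1,t_2)
   &\le \Wasserstein{F}{(\Kantorovich{G}{d_e})}(t_1,t_2)
    \le \EvaluationFunctor{F}(\Kantorovich{G}{d_e})(Fa(t))
     = \ev_F\bigl((F(\Kantorovich{G}{d_e}\circ a))(t)\bigr)\\
   &\le \ev_F\bigl((F\EvaluationFunctor{G}d_e)(t)\bigr)
     = \EvaluationFunctor{FG}d_e(t),
\end{align*}
where the first step is the general inequality $\Kantorovich{F}{d}\le\Wasserstein{F}{d}$, the second uses that $Fa(t)$ is an $F$-coupling, the third combines $\Kantorovich{G}{d_e}\circ a\le\EvaluationFunctor{G}d_e$ (\autoref{lem:comp1}(1)) with monotonicity of $\EvaluationFunctor{F}$, and the last is \autoref{lem:comp1}(4); taking the infimum over $t$ gives the sandwich. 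Composing with the identity yields $\ev_{FG}\colon(FG\reals,\Wasserstein{FG}{d_e})\nonexpansiveTo(\reals,d_e)$, i.e.\ W2 for $FG$.

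For part~(3) I would use the weak-pullback reformulation of W3 (\autoref{lem:W3-weak-pb}) and the pasting lemma. W3 for $G$ makes the square of \autoref{lem:W3-weak-pb} (with $G$ in place of $F$) a weak pullback; since $F$ preserves weak pullbacks, applying $F$ produces a weak-pullback square with corners $FG\set 0$, $F\set 0$, $FG\reals$, $F\reals$ and lower edge $F\ev_G$. W3 for $F$ makes the square of \autoref{lem:W3-weak-pb} itself, with corners $F\set 0$, $\set 0$, $F\reals$, $\reals$ and lower edge $\ev_F$, a weak pullback. These two squares share the column $F\set 0\xrightarrow{Fi}F\reals$ and paste horizontally into the rectangle with corners $FG\set 0$, $\set 0$, $FG\reals$, $\reals$, outer vertical edges $FGi$ and $i$, and lower edge $\ev_F\circ F\ev_G=\ev_{FG}$; a horizontal composite of two weak pullbacks is again a weak pullback (given a cone over the outer cospan, lift first along the right square, then along the left), so this rectangle is a weak pullback and \autoref{lem:W3-weak-pb} reads it back as $\ev_{FG}^{-1}[\set 0]=FGi[FG\set 0]$, i.e.\ W3 for $FG$.

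The one place where genuine work is required is part~(2): parts~(1) and~(3) are essentially formal once \autoref{lem:comp1} and \autoref{lem:W3-weak-pb} are available, but since W2 for $F$ is purposely absent from the hypotheses the naive ``compose two Wasserstein-nonexpansive maps'' argument does not apply, and the Kantorovich detour together with the coupling-pushforward facts of \autoref{lem:comp1} is what carries part~(2) through.
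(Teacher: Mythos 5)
Your parts (1) and (3) coincide with the paper's own proof: part (1) is the same two\mbox{-}line argument via $\EvaluationFunctor{FG}f=\EvaluationFunctor{F}(\EvaluationFunctor{G}f)$, and part (3) is exactly the paper's pasting of the two weak\mbox{-}pullback squares of \autoref{lem:W3-weak-pb} (the $G$\mbox{-}square imaged under $F$, glued along $Fi$ to the $F$\mbox{-}square), with the correct right\mbox{-}then\mbox{-}left lifting order. Part (2) also follows the paper's route in outline: $\ev_G$ is nonexpansive for $\Kantorovich{G}{d_e}$ by plugging $\id_{\reals}$ into the Kantorovich supremum, the $FG$\mbox{-}coupling $t$ is pushed forward to the $F$\mbox{-}coupling $Fa(t)$, and one finishes with $\Kantorovich{G}{d_e}\circ a\le\EvaluationFunctor{G}d_e$ and monotonicity of $\EvaluationFunctor{F}$.

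The gap sits in the one step you advertise as the payoff of your ``Kantorovich detour'': the inequality $\Kantorovich{F}{(\Kantorovich{G}{d_e})}\le\Wasserstein{F}{(\Kantorovich{G}{d_e})}$ is \emph{not} an unconditional fact. In \cite{BBKK14} the comparison $\Kantorovich{F}{d}\le\Wasserstein{F}{d}$ is derived from W1 \emph{and} W2 for $\ev_F$: one must show $d_e(\EvaluationFunctor{F}f(s_1),\EvaluationFunctor{F}f(s_2))\le\EvaluationFunctor{F}d(s)$ for every coupling $s$ and nonexpansive $f$, which reduces via W1 to W2 applied to $F(f\times f)(s)$. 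It genuinely fails otherwise: take $\top=\infty$, $F=\Id$ and $\ev_F(r)=r^2$ (so W1 holds but W2 does not); then $\Kantorovich{F}{d_e}(1,2)=\sup_f|f(1)^2-f(2)^2|=\infty$ over nonexpansive $f$ (use $f(x)=x+c$), while $\Wasserstein{F}{d_e}(1,2)=\ev_F(d_e(1,2))=1$. So your argument has not eliminated the dependence on W2 for $\ev_F$; it has only relocated it into the unlabelled ``general inequality'', and the same example with $G=\Id$, $\ev_G=\id_{\reals}$ shows that the conclusion of part (2) itself cannot be reached from W1 for $\ev_F$ alone. To be fair, the paper's proof leans on the very same fact through its citation of \cite[Prop.~P.4.2]{BBKK14} (whose hypotheses include W2 for $\ev_F$), and the lemma is only ever applied, in \autoref{prop:comp:well-behaved}, with both evaluation functions well-behaved, so nothing downstream is affected; but your explicit claim that the detour ``carries part (2) through'' without W2 for $F$ is not justified, and the honest repair is to invoke the coupling inequality together with its W2 hypothesis on $\ev_F$, as the paper does.
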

\begin{proof}
\begin{enumerate}
\item Let $f,g\colon X \to [0,\top]$ with $f \leq g$, then by monotonicity of $\ev_G$ we have $\EvaluationFunctor{G}f \leq \EvaluationFunctor{G}g$ and using monotonicity of $\ev_F$ we get $\EvaluationFunctor{FG}f = \EvaluationFunctor{F}(\EvaluationFunctor{G}f) \leq \EvaluationFunctor{F}(\EvaluationFunctor{G}g) = \EvaluationFunctor{FG}g$. 
	
	\item Let $t \in FG([0,\top]^2)$ and define $t_i := FG\pi_i(t) \in FG[0,\top]$. By definition $t \in \Couplings{FG}(t_1,t_2)$ so \autoref{lem:comp1} tells us $Fa(t) \in \Couplings{F}(t_1,t_2)$ for $a:=\langle G\pi_1, G\pi_2\rangle$. Moreover, since $\ev_G\colon (G[0,\top],\Kantorovich{G}{d_e}) \nonexpansiveTo ([0,\top],d_e)$ is nonexpansive (by definition of the Kantorovich pseudometric), we can apply \cite[Prop. P.4.2]{BBKK14} to obtain the inequality 
	\begin{align*}
		d_e(\ev_{FG}(t_1), \ev_{FG}(t_2)) = d_e(\EvaluationFunctor{F}\ev_G(t_1), \EvaluationFunctor{F}\ev_G(t_2)) \leq \EvaluationFunctor{F}\Kantorovich{G}{d_e}(Fa(t)) = \EvaluationFunctor{F}(\Kantorovich{G}{d_e} \circ a)(t)\,.
		\end{align*}
		By  \autoref{lem:comp1} we have $\Kantorovich{G}{d_e} \circ a \leq \EvaluationFunctor{G}{d_e}$ and using monotonicity of $\EvaluationFunctor{F}$ we can continue our inequality with $\EvaluationFunctor{F}(\Kantorovich{G}{d_e} \circ a)(t) \leq \EvaluationFunctor{F}(\EvaluationFunctor{G}{d_e})(t) = \EvaluationFunctor{FG}d_e(t)$ which concludes the proof!
		\item Using \autoref{lem:W3-weak-pb} we just have to show that the following diagram is a weak pullback.
		\begin{center}
		\begin{diagram}
			\matrix(m)[matrix of math nodes, column sep=80pt, row sep=20pt]{
				FG\set{0} & F\set{0} & \set{0}\\
				FG[0,\top] & F{[0,\top]} & {[0,\top]}\\
			};
			\draw[->] (m-1-1) edge node[below]{$F!_{G\set{0}}$} (m-1-2);
			\draw[->] (m-1-1) edge node[left]{$FGi$} (m-2-1);
			\draw[->] (m-1-2) edge node[right]{$Fi$} (m-2-2);
			\draw[->] (m-2-1) edge node[above]{$F\ev_G$} (m-2-2);
			
			\draw[->] (m-1-2) edge node[below]{$!_{F\set{0}}$} (m-1-3);
			\draw[right hook->] (m-1-3) edge node[right]{$i$} (m-2-3);
			\draw[->] (m-2-2) edge node[above]{$\ev_F$} (m-2-3);
			
			\draw[->] (m-1-1) edge[bend left=20] node[above]{$!_{FG\set{0}}$} (m-1-3);
			\draw[->] (m-2-1) edge[bend right=20] node[below]{$\ev_{FG}$} (m-2-3);
		\end{diagram}
		\end{center}
		\autoref{lem:W3-weak-pb} tells us that the right square is a weak pullback and since $F$ preserves weak pullbacks also the left square is. The outer part is necessarily a weak pullback again yielding by \autoref{lem:W3-weak-pb} that $\ev_{FG}$ satisfies the third condition.\qedhere
\end{enumerate}
\end{proof}

\restate{prop:comp:well-behaved}
\begin{proof}
This is an immediate corollary of \autoref{lem:comp:evfct}.
\end{proof}

\noindent To prove our compositionality criteria, we use the following results.
\begin{lemmaApx}
\label{lem:compositionality2}
Let $F$, $G$ be endofunctors on $\Set$ with evaluation functions $\ev_F\colon F[0,\top]\to [0,\top]$, $\ev_G\colon G[0,\top] \to [0,\top]$. We define $\ev_{FG} := \ev_F \circ F\ev_G$. Then the following holds for every pseudometric space $(X,d)$.
\begin{enumerate}
	\item $\Kantorovich{FG}{d} \leq \Kantorovich{F}{(\Kantorovich{G}{d})}$.\label{item:leqKantorovich}
	\item If $F$ and $G$ preserve weak pullbacks and $\ev_F, \ev_G$ are well-behaved then $\Wasserstein{FG}{d} \geq \Wasserstein{F}{(\Wasserstein{G}{d})}$.\label{item:geqWasserstein}
	\item If for all $t_1,t_2 \in FGX$ there is a function $\nabla (t_1,t_2)\colon \Couplings{F}(t_1, t_2) \to \Couplings{FG}(t_1,t_2)$ such that $\EvaluationFunctor{FG}d \circ \nabla(t_1,t_2) = \EvaluationFunctor{F}\Wasserstein{G}{d}$ then $\Wasserstein{FG}{d} \leq \Wasserstein{F}{(\Wasserstein{G}{d})}$.\label{prop:comp:nabla}
\end{enumerate}
\end{lemmaApx}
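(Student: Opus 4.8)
The plan is to prove the three inequalities separately, each time transporting a ``test map'' or a ``coupling'' living over $FG$ to a corresponding one living over $F$ applied on top of $G$. The tools are the bookkeeping facts of \autoref{lem:comp1} — above all the identity $\EvaluationFunctor{FG}f = \EvaluationFunctor{F}(\EvaluationFunctor{G}f)$, the coupling transfer $t \in \Couplings{FG}(t_1,t_2) \implies Fa(t) \in \Couplings{F}(t_1,t_2)$ for $a = \langle G\pi_1, G\pi_2\rangle$, and the pointwise bound $\EvaluationFunctor{G}d \ge \Wasserstein{G}{d} \circ a$ — combined with plain functoriality of $F$ in the shape $F(h \circ a) = Fh \circ Fa$.

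For the first inequality I would fix $t_1,t_2 \in FGX$ and an arbitrary nonexpansive $f\colon (X,d) \nonexpansiveTo (\reals,d_e)$. Rewriting $\EvaluationFunctor{FG}f = \EvaluationFunctor{F}(\EvaluationFunctor{G}f)$, the key observation is that $\EvaluationFunctor{G}f = \ev_G \circ Gf\colon GX \to \reals$ is, by the very definition of $\Kantorovich{G}{d}$ as a supremum ranging over exactly such $f$, nonexpansive from $(GX,\Kantorovich{G}{d})$ to $(\reals,d_e)$. Hence $\EvaluationFunctor{G}f$ is an admissible test map for $\Kantorovich{F}{(\Kantorovich{G}{d})}$, so $d_e(\EvaluationFunctor{FG}f(t_1),\EvaluationFunctor{FG}f(t_2)) = d_e(\EvaluationFunctor{F}(\EvaluationFunctor{G}f)(t_1),\EvaluationFunctor{F}(\EvaluationFunctor{G}f)(t_2)) \le \Kantorovich{F}{(\Kantorovich{G}{d})}(t_1,t_2)$, and taking the supremum over $f$ proves $\Kantorovich{FG}{d} \le \Kantorovich{F}{(\Kantorovich{G}{d})}$.

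For the second inequality I would fix $t_1,t_2$ and an arbitrary $t \in \Couplings{FG}(t_1,t_2)$; if this set is empty then $\Wasserstein{FG}{d}(t_1,t_2) = \top$ (an empty infimum) and there is nothing to show. Otherwise $Fa(t) \in \Couplings{F}(t_1,t_2)$, and combining $\EvaluationFunctor{FG}d = \EvaluationFunctor{F}(\EvaluationFunctor{G}d)$ with $\EvaluationFunctor{G}d \ge \Wasserstein{G}{d} \circ a$ and with monotonicity of $\EvaluationFunctor{F}$ (the only part of well-behavedness of $\ev_F$ actually used here, namely W1) gives $\EvaluationFunctor{FG}d(t) = \EvaluationFunctor{F}(\EvaluationFunctor{G}d)(t) \ge \EvaluationFunctor{F}(\Wasserstein{G}{d} \circ a)(t) = \EvaluationFunctor{F}(\Wasserstein{G}{d})(Fa(t)) \ge \Wasserstein{F}{(\Wasserstein{G}{d})}(t_1,t_2)$, the last step because $Fa(t)$ is a coupling of $t_1,t_2$. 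Taking the infimum over $t$ gives $\Wasserstein{FG}{d} \ge \Wasserstein{F}{(\Wasserstein{G}{d})}$.

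For the third inequality I would fix $t_1,t_2$; if $\Couplings{F}(t_1,t_2) = \emptyset$ then $\Wasserstein{F}{(\Wasserstein{G}{d})}(t_1,t_2) = \top$ and we are done, so assume it is nonempty. For every $t' \in \Couplings{F}(t_1,t_2)$ the element $\nabla(t_1,t_2)(t')$ lies in $\Couplings{FG}(t_1,t_2)$ and, by the hypothesis $\EvaluationFunctor{FG}d \circ \nabla(t_1,t_2) = \EvaluationFunctor{F}\Wasserstein{G}{d}$, satisfies $\EvaluationFunctor{FG}d(\nabla(t_1,t_2)(t')) = \EvaluationFunctor{F}(\Wasserstein{G}{d})(t')$; hence $\Wasserstein{FG}{d}(t_1,t_2) \le \EvaluationFunctor{F}(\Wasserstein{G}{d})(t')$, and the infimum over $t'$ yields $\Wasserstein{FG}{d} \le \Wasserstein{F}{(\Wasserstein{G}{d})}$. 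None of the three steps is a genuine obstacle; the one demanding the most care is the second, where one has to thread monotonicity of $\EvaluationFunctor{F}$ through the composition with the mediating arrow $a$ and dispose of the empty-coupling boundary case, whereas the substantive ingredient for later applications is really the existence of the map $\nabla$ assumed in the third part, not the bookkeeping above.
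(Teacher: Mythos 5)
Your proposal is correct and follows essentially the same route as the paper's proof: part (1) via $\EvaluationFunctor{FG}f = \EvaluationFunctor{F}(\EvaluationFunctor{G}f)$ and the observation that $\EvaluationFunctor{G}f$ is an admissible test map for $\Kantorovich{F}{(\Kantorovich{G}{d})}$, part (2) via the coupling transfer $t \mapsto Fa(t)$ combined with $\EvaluationFunctor{G}d \ge \Wasserstein{G}{d}\circ a$ and monotonicity of $\EvaluationFunctor{F}$, and part (3) directly from the assumed $\nabla$. Your explicit handling of the empty-coupling boundary cases and your remark that only W1 of $\ev_F$ drives the inequality in part (2) are harmless refinements of the same argument.
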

\begin{proof}
Let $t_1,t_2 \in FGX$. 
\begin{enumerate}
	\item Recall that $\Kantorovich{G}{d}$ is the smallest pseudometric such that for every nonexpansive function $f\colon (X,d) \nonexpansiveTo ([0,\top],d_e)$ also $\EvaluationFunctor{G}f \colon (GX,\Kantorovich{G}{d}) \nonexpansiveTo ([0,\top],d_e)$ is nonexpansive (see remark after \cite[Def. 3.1]{BBKK14}). Moreover, $\EvaluationFunctor{FG}f = \EvaluationFunctor{F}(\EvaluationFunctor{G}f)$ by \autoref{lem:comp1}. Thus 
	\begin{align*}
		\Kantorovich{FG}{d}(t_1,t_2) &= \sup\set{d_e\big(\EvaluationFunctor{FG}f(t_1),\EvaluationFunctor{FG}f(t_2)\big) \mid f \colon (X,d) \nonexpansiveTo ([0,\top],d_e)}\\
		& = \sup\set{d_e\big(\,\EvaluationFunctor{F}(\EvaluationFunctor{G}f)(t_1),\EvaluationFunctor{F}(\EvaluationFunctor{G}f)(t_2)\big) \mid f \colon (X,d) \nonexpansiveTo ([0,\top],d_e)}\\
		& \leq \sup\set{d_e\big(\,\EvaluationFunctor{F}(g)(t_1),\EvaluationFunctor{F}(g)(t_2)\big) \mid g \colon (GX,\Kantorovich{G}{d}) \nonexpansiveTo ([0,\top],d_e)} \\
		&= \Kantorovich{F}{(\Kantorovich{G}{d})}(t_1,t_2)
	\end{align*}
	
	\item \autoref{lem:comp1} tells us $\EvaluationFunctor{G}d \geq \Wasserstein{G}{d} \circ a$ and for any coupling $t \in \Couplings{FG}(t_1,t_2)$ we have $Fa(t) \in \Couplings{F}(t_1,t_2)$. Using these facts and the monotonicity of $\EvaluationFunctor{F}$ we obtain:
	\begin{align*}
		\Wasserstein{FG}{d}(t_1,t_2) &= \inf\set{\EvaluationFunctor{FG}d(t) \mid t\in \Couplings{FG}(t_1,t_2)} = \inf\set{\EvaluationFunctor{F}(\EvaluationFunctor{G}d)(t) \mid t\in \Couplings{FG}(t_1,t_2)}\\
		& \geq  \inf\set{\EvaluationFunctor{F}(\Wasserstein{G}{d} \circ a)(t) \mid t\in \Couplings{FG}(t_1,t_2)}\\
		& =  \inf\set{\EvaluationFunctor{F}\Wasserstein{G}{d} \big(Fa(t)\big) \mid t\in \Couplings{FG}(t_1,t_2)} \\
		& \geq  \inf\set{\EvaluationFunctor{F}\Wasserstein{G}{d} (t') \mid t'\in \Couplings{F}(t_1,t_2)} = \Wasserstein{F}{(\Wasserstein{G}{d})}(t_1,t_2)
	\end{align*}

\item Using $\nabla(t_1,t_2)$ we compute
\begin{align*}
\Wasserstein{FG}{d}(t_1,t_2) &=  \inf{\set{\EvaluationFunctor{FG}d (t') \mid t'\in \Couplings{FG}(t_1,t_2)} }\\
& \leq \inf{\set{\EvaluationFunctor{FG}d \big(\nabla(t_1,t_2)(t)\big) \mid t\in \Couplings{F}(t_1,t_2)} } \\
&= \inf{\set{\EvaluationFunctor{F}\Wasserstein{G}{d} (t) \mid t\in \Couplings{F}(t_1,t_2)} } =\Wasserstein{F}{(\Wasserstein{G}{d})}(t_1,t_2) \,.\qedhere
\end{align*}
\end{enumerate}
\end{proof}

\noindent With this result at hand we can prove
\restate{prop:comp}
\begin{proof}
From \autoref{lem:compositionality2}.\ref{item:geqWasserstein} we know $\Wasserstein{FG}{d} \geq \Wasserstein{F}{(\Wasserstein{G}{d})}$. By our requirement we have a function $\gamma\colon GX \times GX \to G(X\times X)$, such that $\Wasserstein{G}{d} = \EvaluationFunctor{G}d \circ \gamma$. Given $t_1,t_2 \in FGX$ and $t \in \Couplings{F}(t_1,t_2)$, we define $\nabla(t_1,t_2)(t) = F\gamma(t)$, then this satisfies the conditions of  \autoref{lem:compositionality2}.\ref{prop:comp:nabla}. First, we have $F\gamma(t) \in \Couplings{FG}(t_1,t_2)$ because $FG\pi_i^X(F\gamma(t)) = F(G\pi_i^X \circ \gamma)(t) = F\pi_i^{GX}(t) = t_i$. Moreover
\begin{align*}
\EvaluationFunctor{FG}d\big(F\gamma(t)\big) &= \ev_{FG} \circ F\big(Gd \circ \gamma(t)\big) = \ev_{F} \circ F\ev_G \circ F(Gd \circ \gamma)(t) \\
&=\ev_{F} \circ F\big(\EvaluationFunctor{G}d \circ \gamma\big)(t)  = \ev_{F} \circ F\Wasserstein{G}{d} (t) = \EvaluationFunctor{F}\Wasserstein{G}{d}(t)\,.\qedhere
\end{align*}
\end{proof}

\restatewithstyle{exa:comp}{restatedefinition}
\begin{proof}
We just have to prove the second claim. We already know from \autoref{lem:compositionality2}.\ref{item:geqWasserstein} that 
\begin{align}
	\Wasserstein{\PowersetFinite\PowersetFinite}{d} \geq \Wasserstein{\PowersetFinite}{\left(\Wasserstein{\PowersetFinite}{d}\right)}\label{eq:powfin0}
\end{align} holds. We now show that we always have equality. Let $(X,d)$ be a pseudometric space and $T_1,T_2 \in \PowersetFinite\PowersetFinite X$. We distinguish three cases:\\
\emph{Case 1}: If $T_1=T_2=\emptyset$ we know by reflexivity that both values are $0$.\\
\emph{Case 2}: If $T_1=\emptyset\not=T_2$ or $T_1\not=\emptyset=T_2$ we know from \cite{BBKK14} that $\Couplings{\PowersetFinite}(T_1,T_2) = \emptyset$ and therefore $\Wasserstein{\PowersetFinite}{\left(\Wasserstein{\PowersetFinite}{d}\right)}(T_1,T_2) = \top$ and thus \eqref{eq:powfin0} is an equality. \\
\emph{Case 3}: Let $T_1,T_2 \not=\emptyset$. We know from \cite{BBKK14} that we have an optimal coupling $T^* \in \Couplings{\PowersetFinite}(T_1,T_2)$, say $T^* = \set{(V_{j1}, V_{j2}) \in \PowersetFinite X \times \PowersetFinite X \mid j \in J}$ for a suitable index set $J$. Then $T_i = \PowersetFinite \pi_i (T^*) = \pi_i[T^*] = \set{\pi_i((V_{j1}, V_{j2})) \mid j \in J} = \set{V_{ji} \mid j \in J}$. By optimality:
\begin{align}
	\Wasserstein{\PowersetFinite}{\left(\Wasserstein{\PowersetFinite}{d}\right)}(T_1,T_2) = \EvaluationFunctor{\PowersetFinite}{\Wasserstein{\PowersetFinite}{d}}(T^*) = \max \Wasserstein{\PowersetFinite}{d}[T^*] = \max_{j \in J} \Wasserstein{\PowersetFinite}{d}(V_{j1}, V_{j2})\,. \label{eq:powfin1}
\end{align}
Again we will make a case distinction:
\begin{itemize}
\item If there is a $j' \in J$ such that $\Couplings{\PowersetFinite}(V_{j'1}, V_{j'2}) = \emptyset$, we have $\Wasserstein{\PowersetFinite}{d}(V_{j'1}, V_{j'2}) = \top $ and using \eqref{eq:powfin1} also $\Wasserstein{\PowersetFinite}{\left(\Wasserstein{\PowersetFinite}{d}\right)}(T_1,T_2) = \top$ which again shows that \eqref{eq:powfin0} is an equality. 
\item Otherwise we can take optimal couplings $V_j^* \in \Couplings{\PowersetFinite}(V_{j1}, V_{j2})$. Continuing \eqref{eq:powfin1} we have
\begin{align}
	\Wasserstein{\PowersetFinite}{\left(\Wasserstein{\PowersetFinite}{d}\right)}(T_1,T_2) = \max_{j \in J} \EvaluationFunctor{\PowersetFinite}d(V_j^*) = \max_{j \in J} \max d[V_j^*] \label{eq:powfin2}
\end{align}

Then we define $T := \set{V_j^* \mid j \in J} \subseteq \PowersetFinite\PowersetFinite(X \times X)$. We calculate for $\pi_i \colon X \times X \to X$
\begin{align*}
\PowersetFinite\PowersetFinite \pi_i (T) = \PowersetFinite\pi_i[T] = \set{\PowersetFinite\pi_i(V_j^*) \mid j \in J} = \set{V_{ji} \mid j \in J} = T_i
\end{align*}
and thus $T \in \Couplings{\PowersetFinite\PowersetFinite}(T_1,T_2)$. Moreover we have
\begin{align}
\Wasserstein{\PowersetFinite\PowersetFinite}{d}(T_1,T_2)  &\leq \EvaluationFunctor{\PowersetFinite\PowersetFinite}d(T) =\max\left(\PowersetFinite \max\left(\PowersetFinite\PowersetFinite(T)\right)\right) \nonumber\\
&= \max \left( \max \left[ \PowersetFinite d [T] \right] \right) = \max \left( \max\set{ d[V_j^*]\mid j \in J} \right) \nonumber\\
&= \max_{j \in J}\max d[V_j^*] \label{eq:powfin3}
 \end{align}
thus using this, \eqref{eq:powfin2} and \eqref{eq:powfin0} we conclude that 
\begin{align*}
\Wasserstein{\PowersetFinite\PowersetFinite}{d}(T_1,T_2) \leq \max_{j \in J}\max d[V_j^*] = \Wasserstein{\PowersetFinite}{\left(\Wasserstein{\PowersetFinite}{d}\right)}(T_1,T_2) \leq \Wasserstein{\PowersetFinite\PowersetFinite}{d}(T_1,T_2)
\end{align*}
which proves equality. \qedhere
\end{itemize}
\end{proof}

\noindent To verify the claims made in \autoref{exa:inputfunctor} we need the following intermediary result.
\begin{lemmaApx}
\label{lem:max-sum}
For finite $A$ and functions $f,g\colon A \to [0,\infty]$ we have
\begin{enumerate}
\item $d_e\big(\max_{a \in A}f(a),\max_{a \in A}g(a)\big) \leq \max_{a \in A} d_e\big(f(a), g(a)\big)$.
\item $d_e\left(\sum_{a \in A}f(a),\sum_{a \in A}g(a)\right) \leq \sum_{a \in A} d_e\big(f(a), g(a)\big)$.
\end{enumerate}
\end{lemmaApx}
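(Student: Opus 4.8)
The plan is to prove both inequalities by the standard observation that $\max$ and $\sum$ are nonexpansive for the $\ell^\infty$- and $\ell^1$-style aggregations, the only subtlety being the conventions for the value $\infty$. Recall the characterization used in the proof for the product bifunctor (\cite[Lemma P2.1]{BBKK14}): for $x,y \in \prealinf$ with $x \neq \infty \neq y$ and any $z \in \prealinf$, the inequality $d_e(x,y) \leq z$ holds iff both $x \leq y + z$ and $y \leq x + z$. Hence in each of the two cases I would first dispatch the situations in which one of the aggregated quantities equals $\infty$, and otherwise reduce the claim to two ``triangle-shaped'' inequalities.

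For the first claim, write $m_f := \max_{a\in A} f(a)$, $m_g := \max_{a\in A} g(a)$ and $R := \max_{a\in A} d_e(f(a),g(a))$. If $m_f = \infty$ and $m_g \neq \infty$, then, since $A$ is finite and the maximum is therefore attained, there is some $a_0$ with $f(a_0) = \infty$, while $g(a_0) \leq m_g < \infty$; thus $d_e(f(a_0),g(a_0)) = \infty$ by the convention for $d_e$, so $R = \infty$ and the inequality holds. The case $m_g = \infty \neq m_f$ is symmetric, and if $m_f = m_g = \infty$ the left-hand side is $0$. In the remaining case $m_f, m_g \neq \infty$ (so all $f(a), g(a)$ are finite) it suffices to show $m_f \leq m_g + R$ and, symmetrically, $m_g \leq m_f + R$. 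For the first, the pointwise triangle inequality gives $f(a) \leq g(a) + d_e(f(a),g(a)) \leq m_g + R$ for every $a \in A$; taking the maximum over $a$ and using monotonicity of $\max$ yields $m_f \leq m_g + R$, and the characterization above then gives $d_e(m_f, m_g) \leq R$.

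For the second claim the argument has the same shape with $\max$ replaced by $\sum$: put $S_f := \sum_{a\in A} f(a)$, $S_g := \sum_{a\in A} g(a)$ and $S := \sum_{a\in A} d_e(f(a),g(a))$. If $S_f = \infty$ and $S_g \neq \infty$, then, $A$ being finite, some summand $f(a_0)$ must equal $\infty$, while $g(a_0) \leq S_g < \infty$, so $d_e(f(a_0),g(a_0)) = \infty$ and hence $S = \infty$; the other infinite cases are handled as before. When $S_f, S_g \neq \infty$, summing the pointwise bound $f(a) \leq g(a) + d_e(f(a),g(a))$ over $a \in A$ and using additivity of finite sums gives $S_f \leq S_g + S$, and symmetrically $S_g \leq S_f + S$, whence $d_e(S_f, S_g) \leq S$.

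The only delicate point is the bookkeeping with $\infty$: one has to invoke explicitly the conventions $x + \infty = \infty$, $d_e(x,\infty) = \infty$ for $x \neq \infty$, and $d_e(\infty,\infty) = 0$, and to remember that the equivalence ``$d_e(x,y) \leq z \iff x \leq y+z \ \wedge\ y \leq x+z$'' is only available for finite $x$ and $y$ — which is why the case split on the infinite values comes first. Beyond that, everything reduces to monotonicity of $\max$, additivity of finite sums, and the pointwise triangle inequality, so I expect no real obstacle.
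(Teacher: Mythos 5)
Your proof is correct and follows essentially the same route as the paper's: dispatch the cases where one of the aggregated values is $\infty$ by exhibiting a single index $a_0$ forcing the right-hand side to be $\infty$, and in the all-finite case reduce to the pointwise bound $f(a)\le g(a)+d_e(f(a),g(a))$ combined with monotonicity of $\max$ (resp.\ additivity of the finite sum). The only cosmetic difference is that you symmetrize via the two ``triangle-shaped'' inequalities where the paper argues WLOG on which aggregate is larger; the substance is identical.
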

\begin{proof}
\begin{enumerate}
\item Let $a_f \in \argmax_{a \in A} f(a)$ and $a_g \in \argmax_{a \in A} g(a)$, i.e. $a_f = \max_{a \in A}f$ and $a_g = \max_{a \in A}g$. If $f(a_f) = g(a_g)$ the lhs is $0$ and the inequality is satisfied. From here we assume wlog $f(a_f) > g(a_g)$. Now if $f(a_f) = \infty$, the lhs is $\infty$ but also $\max_{a \in A} d_e(f(a), g(a)) \geq d_e(f(a_f), g(a_f)) = \infty$. Finally, for $f(a_f) < \infty$ we have $g(a_f) \leq g(a_g)$ and thus $d_e(f(a_f), g(a_g)) = f(a_f) - g(a_g) \leq f(a_f) - g(a_f) \leq \max_{a \in A}d_e(f(a), g(a))$.
\item Let $s_f := \sum_{a \in A} f(a)$ and $s_g := \sum_{g \in A} f(a)$. If $s_f = s_g$ the lhs is $0$ and the inequality is satisfied. From here we assume wlog $s_f > s_g$. Now if $s_f=\infty$, the lhs is $\infty$ but we also must have an $a' \in A$ such that $f(a') = \infty$ (otherwise $s_f < \infty$) and thus $\sum_{a \in A} d_e(f(a), g(a)) \geq d_e(f(a'), g(a')) = \infty$. Finally, for $s_f< \infty$ we have $d_e(s_f, s_g) = s_f - s_g =  \sum_{a \in A} f(a) - \sum_{a \in A} g(a) = \sum_{a \in A} \left(f(a)-g(a)\right) \leq \sum_{a \in A} |f(a)-g(a)| = \sum_{a \in A} d_e(f(a), g(a))$.\qedhere
\end{enumerate}
\end{proof}

\restatewithstyle{exa:inputfunctor}{restatedefinition}
\begin{proof}
We first show that the functor $F := \_^A$ on $\Set$ preserves pullbacks. If we have a pullback in $\Set$ as indicated in the left of the diagram below, then we have to show that the right diagram is a pullback. 
\begin{center}\begin{diagram}
	\matrix(m)[matrix of math nodes, column sep=1.5cm, row sep=10pt]{
		P_{} & X_1 && P^A_{} & X_1^A\\
		X_2 & Y_{} && X_2^A & Y^A_{}\\
	};
	\draw[->] (m-1-1) edge node[above]{$p_1$} (m-1-2);
	\draw[->] (m-1-1) edge node[left]{$p_2$} (m-2-1);
	\draw[->] (m-1-2) edge node[right]{$f_1$} (m-2-2);
	\draw[->] (m-2-1) edge node[below]{$f_2$} (m-2-2);
	
	\draw[->] (m-1-4) edge node[above]{$p_1^A$} (m-1-5);
	\draw[->] (m-1-4) edge node[left]{$p_2^A$} (m-2-4);
	\draw[->] (m-1-5) edge node[right]{$f_1^A$} (m-2-5);
	\draw[->] (m-2-4) edge node[below]{$f_2^A$} (m-2-5);
\end{diagram}\end{center}
We consider the canonical pullback: $P := \set{(x_1, x_2) \in X_1 \times X_2 \mid f_1(x_1) = f_2(x_2)}$ along with $p_i:= \pi_i|_{P}$ and
\begin{align*}
	P' &:= \set{(g_1,g_2) \in X_1^A \times X_2^A \mid f_1^A(g_1) = f_2^A(g_2)} \\
	&\cong \set{\langle g_1,g_2\rangle \in (X_1\times X_2)^A \mid \forall a \in A.f_1\big(g_1(a)\big) = f_2\big(g_2(a)\big)}\\
	&\cong \set{\langle g_1,g_2\rangle \in (X_1\times X_2)^A \mid \forall a \in A.\Big(\big(g_1(a), g_2(a)\big) \in P\Big)} \cong P^A
\end{align*}
which completes the proof of weak pullback preservation. We now show that the evaluation functions are well-behaved. For $f\colon X \to [0,\top]$ we have $\EvaluationFunctor{F}f = \ev_F \circ f^A$ i.e. applying it to $g\in X^A$ yields $\max{a \in A} f(g(a))$ or $\sum_{a \in A} f(g(a))$. 
	\begin{well-behaved-axioms}
	\item For $f_1,f_2  \colon X \to [0,\top]$ with $f_1 \leq f_2$ we obviously also have $\EvaluationFunctor{F}f_1 \leq \EvaluationFunctor{F}f_2$.
	\item Let $t \in ([0,\top]^2)^A$ and $t_i := \pi_i^A(t)$, i.e. necessarily $t = \langle t_1, t_2\rangle$. We have to show 
	\begin{align*}
		d_e\big(\ev_F(t_1), \ev_F(t_2)\big) \leq \EvaluationFunctor{F}d_e(t) = \ev_F \big(d_e^A(t)\big) = \ev_F (d_e \circ t) = \ev_F(d_e \circ \tuple{t_1,t_2})\,.
	\end{align*}
	which for our evaluation functions follows from \autoref{lem:max-sum} with $f=t_1$, $g=t_2$.

	\item We have $\ev_F^{-1}[\set{0}] = \set{g \colon A \to [0,\top] \mid \ev_F(g) =0}$. Clearly for both functions this is the case only if $r$ is the constant $0$-function. Since $\set{0}$ is a final object in $\Set$, there is a unique function $z \colon A \to \set{0}$. Thus $Fi[F\set{0}] = i^A[\set{0}^A] = \set{i^A(z)} = \set{i \circ z}$ and clearly $i \circ z \colon A \to [0,\top]$ is also the constant $0$-function. 
	\end{well-behaved-axioms}
Now if we have $s_1,s_2 \in X^A$ their unique coupling is $s:=\tuple{s_1,s_2} \colon A \to X \times X$. Moreover $\EvaluationFunctor{F}d(s) = \ev_F ( d^A(s)) = \ev_F(\lambda a. d(\tuple{s_1,s_2})) = \ev_F(\lambda a.d(s_1(a), s_2(a))$ and using the two different evaluation functions we obtain the given pseudometrics.
\end{proof}
\subsubsection{Compositionality for Multifunctors}
We conclude this section with a more detailed presentation on how our theory extends to multifunctors. 

For $n \in \N$ we denote by $[n] := \{1,\dots,n\} \subseteq \N$ the set of all positive natural numbers less than or equal to $n$. Now let $n_i \in \N$ for all $i \in[n]$ and $F\colon \Set^n \to \Set$ and $G_i\colon \Set^{n_i} \to \Set$ (for $i \in [n]$) be multifunctors with evaluation functions $\ev_F\colon F({[0,\top]}^n) \to [0,\top]$ and $\ev_{G_i}\colon G_i([0,\top]^{n_i}) \to [0, \top]$. We define $N:=\sum_{i=1}^n n_i$ and define the functor 
\begin{align*}
	H := F \circ \prod_{i=1}^nG_i = F \circ (G_1 \times G_2 \times \dots \times G_n) \colon \Set^N \to \Set
\end{align*}
Then we can define the evaluation function $\ev_H\colon H([0,\top]^N) \to [0,\top]$ by
\begin{align*}
	\ev_H := \ev_F \circ F(\ev_{G_1}, \ev_{G_2}, \dots, \ev_{G_n})\,.
\end{align*}
In this setting, compositionality for the Wasserstein lifting means that whenever we have $N$ pseudometric spaces $(X_i,d_i)$ the pseudometric $\Wasserstein{H}{(d_1,\dots, d_N)}$ is equal to
\begin{align*}
	\Wasserstein{F}{\left(\Wasserstein{G_1}{(d_1,\dots,d_{n_1})}, \Wasserstein{G_2}{(d_{n_1+1},\dots,d_{n_1+n_2})}\dots, \Wasserstein{G_n}{(d_{N-n_n+1}, \dots, d_N)}\right)}\,.
\end{align*}
In the examples in this paper we will just have the following two cases:
\begin{enumerate}
\item $n=1$, $n_1=2$ so that $F\colon \Set \to \Set$ is an endofunctor with evaluation function $\ev_F\colon F[0,\top] \to [0,\top]$ and $G \colon \Set^2 \to \Set$ is a bifunctor with evaluation function $\ev_G\colon G([0,\top],[0,\top]) \to [0,\top]$. Then we have $N=n_1=2$ and obtain the bifunctor $H = F \circ G \colon \Set^2 \to \Set$ with evaluation $\ev_H = \ev_F \circ F\ev_G \colon FG([0,1],[0,1]) \to [0,1]$. Compositionality means that for an two pseudometric spaces $(X_1,d_1)$, $(X_2,d_2)$ we have $\Wasserstein{H}{(d_1,d_2)} = \Wasserstein{F}{(\Wasserstein{G}{(d_1,d_2)})}$.
\item $n=2$, $n_1=n_2 = 1$ so that $F\colon \Set^2 \to \Set$ is a bifunctor with evaluation function $\ev_F \colon F([0,\top],[0,\top]) \to [0,\top]$ and $G_1,G_2\colon \Set \to \Set$ are endofunctors with evaluations $\ev_{G_i}\colon G_i[0,\top] \to [0,\top]$. Then we have $N=n_1+n_2 = 1 + 1=2$ and obtain the bifunctor $H = F \circ (G_1 \times G_2) \colon \Set^2 \to \Set$ with evaluation $\ev_H = \ev_F \circ F(\ev_{G_1}, \ev_{G_2})\colon F(G_1[0,\top],G_2[0,\top]) \to [0,\top]$. Compositionality means that for an two pseudometric spaces $(X_1,d_1)$, $(X_2,d_2)$ we have $\Wasserstein{H}{(d_1,d_2)} = \Wasserstein{F}{(\Wasserstein{G}{d_1},\Wasserstein{G_2}{d_2})}$.
\end{enumerate}

\noindent The results presented for endofunctors work analogously in the multifunctor case (the proofs can be transferred almost verbatim), so we do not explicitly present them here.

\restatewithstyle{exa:machinebifunctor}{restatedefinition}
\begin{proof}
We first compute the composed evaluation functions. Let $(o,s) \in [0,1] \times [0,1]^A$, then
\begin{align*}
	\ev_M(o,s) &= \ev_P \circ P(\id_{[0,\top]}, \ev_I)  (o,s) = \ev_P \circ (\id_{[0,\top]} \times \ev_I) (o,s) = \ev_P\Big(o,\ev_I(s)\Big)
\end{align*} 
For the first case we thus have $\ev_M(o,s) = \max\set{c_1o, c_2\max_{a\in A}s(a)}$ and for the second $\ev_M(o,s) = c_1o + c_2{|A|}^{-1} \sum_{a \in A} d(s_1(a),s_2(a))$ as claimed. Given $t_1,t_2 \in M(B,X)$ with $t_i = (b_i,s_i) \in B \times X^A$ we take their unique coupling $t := (b_1,b_2,\tuple{s_1,s_2})$ to compute for pseudometrics $d_B$ on $B$ and $d$ on $X$:
\begin{align*}
	\Wasserstein{M}{(d_B,d)}(t_1,t_2) &= \EvaluationFunctor{M}(d_B,d) (t) = \ev_M \circ M(d_B,d) (t) \\
	&= \ev_M \circ \left(d_B \times d^A\right) \left(b_1,b_2,\tuple{s_1,s_2}\right) \\
	&=\ev_M\Big(d_B(b_1,b_2), \lambda a.d\big(s_1(a), s_2(a)\big)\Big)\,.
\end{align*}
Now if we take the two evaluation functions from above, we obtain the Wasserstein pseudometrics which are given in the example.
\end{proof}

\restatewithstyle{exa:m2}{restatedefinition}
\begin{proof}
We first prove that the bifunctor and endofunctor liftings coincide. Given $t_1,t_2 \in 2 \times X^A$, say $t_i = (o_i,s_i)$, 
their unique $M$-coupling is $t = (o_1,o_2,\tuple{s_1,s_2})$. 

If $o_1 \not = o_2$ no $M_2$-coupling of $t_1,t_2$ exists so we have $\Wasserstein{M_2}{d}{(t_1,t_2)} = \top$ but also $d_2(o_1,o_2) = \top$ so $\Wasserstein{M}{(d_2,d)}(t_1,t_2) = \EvaluationFunctor{M}(d_2,d)(t)\geq c_1d_2(o_1,o_2) \geq \top$. 

If $o_1 = o_2$ the unique $M_2$-coupling of $t_1,t_2$ is $t'=(o_1, \tuple{s_1,s_2})$ and $d_2(o_1,o_2) = 0$ thus
\begin{align*}
	\Wasserstein{M}{(d_2,d_e)}(t_1,t_2) &= \EvaluationFunctor{M}(d_2,d)(t) = c_2\ev_I\Big(\lambda a.d\big(s_1(a),s_2(a)\big)\Big) \\
	&= \ev_{M_2} \Big(o_1, \lambda a.d\big(s_1(a),s_2(a)\big)\Big) = \ev_{M_2}\Big(\big(\id_2 \times d^A\big) \big(o_1,\tuple{s_1,s_2}\big) \Big) \\
	&= \EvaluationFunctor{M_2}(t') = \Wasserstein{M_2}{d}(t_1,t_2)\,.
\end{align*}

\noindent For compositionality we adapt the proof of \autoref{exa:comp}. We know from \autoref{lem:compositionality2}.\ref{item:geqWasserstein} that 
\begin{align}
	\Wasserstein{\PowersetFinite M_2}{d} \geq \Wasserstein{\PowersetFinite}{\left(\Wasserstein{M_2}{d}\right)}\label{eq:m2:0}
\end{align} holds. We now show that we always have equality. Let $(X,d)$ be a pseudometric space and $T_1,T_2 \in \PowersetFinite M_2 X = \PowersetFinite (2 \times X^A)$. We distinguish three cases:\\
\emph{Case 1}: If $T_1=T_2=\emptyset$ we know by reflexivity that both values are $0$.\\
\emph{Case 2}: If $T_1=\emptyset\not=T_2$ or $T_1 \not=\emptyset=T_2$ we know from \cite{BBKK14} that $\Couplings{\PowersetFinite}(T_1,T_2) = \emptyset$ and therefore $\Wasserstein{\PowersetFinite}{\left(\Wasserstein{M_2}{d}\right)}(T_1,T_2) = \top$ and thus \eqref{eq:m2:0} is an equality. \\
\emph{Case 3}: Let $T_1,T_2 \not=\emptyset$. We know from \cite{BBKK14} that we have an optimal coupling $T^* \in \Couplings{\PowersetFinite}(T_1,T_2)$, say $T^* = \set{\big((o_{j1},s_{j1}),(o_{j2},s_{j2})\big) \in M_2X \times M_2X \mid j \in J}$ for a suitable index set $J$. Then using $\pi_i \colon M_2X \times M_2X \to M_2X$ we have $T_i = \PowersetFinite \pi_i (T^*) = \pi_i[T^*] = \set{\pi_i\big((o_{j1},s_{j1}),(o_{j2},s_{j2})\big) \mid j \in J} = \set{(o_{ji}, s_{ji}) \mid j \in J}$. By optimality:
\begin{align}
	\Wasserstein{\PowersetFinite}{\left(\Wasserstein{M_2}{d}\right)}(T_1,T_2) &= \EvaluationFunctor{\PowersetFinite}{\Wasserstein{M_2}{d}}(T^*) = \max \Wasserstein{M_2}{d}[T^*] \nonumber\\
	&= \max_{j \in J} \Wasserstein{M_2}{d}\big((o_{j1},s_{j1}),(o_{j2},s_{j2})\big)\,. \label{eq:m2:1}
\end{align}
Again we will make a case distinction:
\begin{itemize}
\item If there is a $j' \in J$ such that $\Couplings{M_2}\big((o_{j'1},s_{j'1}),(o_{j'2},s_{j'2})\big) = \emptyset$ (iff $o_{j'1} \not = o_{j'2}$), we have $\Wasserstein{M_2}{d}\big((o_{j1},s_{j1}),(o_{j2},s_{j2})\big) = \top $ and using \eqref{eq:m2:1} also $\Wasserstein{\PowersetFinite}{\left(\Wasserstein{M_2}{d}\right)}(T_1,T_2) = \top$ which again shows that \eqref{eq:m2:0} is an equality. 

\item Otherwise for every $j \in J$ we can take the unique coupling $(o_{j1}, \tuple{s_{j1},s_{j2}})  \in \Couplings{M_2}\big((o_{j1},s_{j1}),(o_{j2},s_{j2})\big)$ which is necessarily optimal. Continuing \eqref{eq:m2:1} we have
\begin{align}
	\Wasserstein{\PowersetFinite}{\left(\Wasserstein{M_2}{d}\right)}(T_1,T_2) &= \max_{j \in J} \EvaluationFunctor{M_2}d\left(o_{j1}, \tuple{s_{j1},s_{j2}}\right) \nonumber\\
	&= \max_{j \in J} \ev_{M_2}\Big(\big(\id_2 \times d^A\big)\big(o_{j1}, \tuple{s_{j1},s_{j2}}\big)\Big)\nonumber\\
	&= \max_{j \in J} \ev_{M_2}\Big(o_{j1}, \lambda a.d\big(s_{j1}(a),s_{j2}(a)\big)\Big)\nonumber\\
	&= c \cdot \max_{j \in J} \ev_{I} \Big(\lambda a.d\big(s_{j1}(a),s_{j2}(a)\big)\Big)\label{eq:m2:2}
	\end{align}

Then we define 
\begin{align*}
	T := \set{(o_{j1}, \tuple{s_{j1},s_{j2}}) \mid j \in J} \subseteq \PowersetFinite M_2(X \times X) = \PowersetFinite\Big(2 \times (X \times X)^A\Big)\,.
\end{align*}
We calculate for $\pi_i \colon X \times X \to X$
\begin{align*}
\PowersetFinite M_2 \pi_i (T) = (\id_2 \times \pi_i^A)[T] = \set{(o_{j1}, s_{ji}) \mid j \in J} = T_i
\end{align*}
and thus $T \in \Couplings{\PowersetFinite M_2}(T_1,T_2)$. Moreover we have
\begin{align}
\Wasserstein{\PowersetFinite M_2}{d}(T_1,T_2) &\leq \EvaluationFunctor{\PowersetFinite M_2}d(T) = \ev_{\PowersetFinite} \circ \PowersetFinite \ev_{M2} \circ \PowersetFinite M_2 d(T) \nonumber\\
&=\max\left(\PowersetFinite \left(\ev_{M2} \circ M_2 d\right)(T)\right) = \max \left((\ev_{M_2} \circ M_2 d) [T]  \right) \nonumber\\
&= \max \Big( \ev_{M2} \big[(\id_2 \times d^A)[T]\big]\Big)\nonumber \\
&=\max_{j \in J}  \ev_{M2} \Big(o_{j1}, \lambda a.d\big(s_{j1}(a), s_{j2}(a)\big)\Big)\nonumber\\
&=c \cdot \max_{j \in J}  \ev_{I} \Big(\lambda a.d\big(s_{j1}(a), s_{j2}(a)\big)\Big)
 \label{eq:m2:3}
 \end{align}
thus using this, \eqref{eq:m2:2} and \eqref{eq:m2:0} we conclude that 
\begin{align*}
\Wasserstein{\PowersetFinite M_2}{d}(T_1,T_2) &\leq c \cdot \max_{j \in J}  \ev_{I} \Big(\lambda a.d\big(s_{j1}(a), s_{j2}(a)\big)\Big)\\
&= \Wasserstein{\PowersetFinite}{\left(\Wasserstein{M_2}{d}\right)}(T_1,T_2) \leq \Wasserstein{\PowersetFinite M_2}{d}(T_1,T_2)
\end{align*}
which proves equality. \qedhere
\end{itemize}
\end{proof}

\subsection{\nameref{sec:monadlifting}}
\restate{prop:nt-lifting}
\begin{proof}
Let $t_1,t_2 \in FX$.
\begin{enumerate}
\item By naturality of $\lambda$ and $\ev_G \circ \lambda_{[0,\top]} \leq \ev_F$ we obtain for every $f \colon X \to [0,\top]$:\begin{align}
	\EvaluationFunctor{G}f \circ \lambda_X = \ev_G \circ Gf \circ \lambda_X = \ev_G \circ \lambda_{[0,\top]} \circ Ff \leq \ev_F \circ Ff = \EvaluationFunctor{F}f\,.\label{eq:nt-lifting:1}
\end{align}
Using this we compute
\begin{align}
	&\Kantorovich{G}{d}\left(\lambda_X(t_1), \lambda_X(t_2)\right) = \sup\set{d_e\left(\EvaluationFunctor{G}f\big(\lambda_X(t_1)\big),\EvaluationFunctor{G}f\big(\lambda_X(t_2)\big)\right)\, \middle|\, f \colon (X,d) \nonexpansiveTo ([0,\top],d_e)}\nonumber\\
	&\quad\leq\sup\set{d_e\left(\EvaluationFunctor{F}f(t_1),\EvaluationFunctor{F}f(t_2)\right)\, \middle|\, f \colon (X,d) \nonexpansiveTo ([0,\top],d_e)} = \Kantorovich{F}{d}(t_1,t_2)\,.\label{eq:nt-lifting:2}
\end{align}
\item We just have to replace the inequality by equality in \eqref{eq:nt-lifting:1} and \eqref{eq:nt-lifting:2}.
\item Naturality of $\lambda$ yields the following equations, where $\pi_i \colon X \times X \to X$ are the projections of the product and $d\colon X \times X \to [0,\top]$ a pseudometric on $X$.
\begin{align}
	\lambda_X \circ F\pi_i &= G\pi_i \circ \lambda_{X \times X}\label{eq:nt-lifting:3}\\
	\lambda_{[0,\top]} \circ Fd &= Gd \circ \lambda_{X \times X}\label{eq:nt-lifting:4}
\end{align}
Using \eqref{eq:nt-lifting:3}, we can see that $\lambda_{X \times X}$ maps every coupling $t$ of $t_1$ and $t_2$ to a coupling $\lambda_{X \times X}(t)$ of $\lambda_X(t_1)$ and $\lambda_X(t_2)$ because $G\pi_i(\lambda_{X \times X}(t)) = \lambda_X(F\pi_i(t)) = \lambda_X(t_i)$. Moreover, by our requirement we obtain
\begin{align*}
	\EvaluationFunctor{G}d(\lambda_{X \times X}(t)) = \ev_{G} \circ Gd \circ \lambda_{X \times X}(t) &= \ev_{G} \circ \lambda_{[0,\top]} \circ Fd (t)
\leq \ev_{F} \circ Fd (t) = \EvaluationFunctor{F}d(t)
\end{align*}
With these preparations at hand we can finally see that
\begin{align*}
	\Wasserstein{G}{d}(\lambda_X(t_1), \lambda_X(t_2)) &= \inf\set{\EvaluationFunctor{G}d(t') \mid t' \in \Couplings{G}\big(\lambda_X(t_1),\lambda_X(t_2)\big)}\\
	&\leq \inf\set{\EvaluationFunctor{G}d(\lambda_{X \times X}(t)) \mid t \in \Couplings{F}(t_1,t_2)}\\
	&\leq \inf\set{\EvaluationFunctor{F}d(t) \mid t \in \Couplings{F}(t_1,t_2)} = \Wasserstein{F}{d}(t_1,t_2)\,.
\end{align*}

\item Using the previous two results and the fact that Wasserstein is an upper bound yields: 
\begin{align*}
	 \Kantorovich{F}{d} = \Kantorovich{G}{d} \circ (\lambda_X \times \lambda_X) \leq \Wasserstein{G}{d} \circ (\lambda_X\times\lambda_X) \leq \Wasserstein{F}{d}
\end{align*}
and since $\Kantorovich{F}{d} = \Wasserstein{F}{d}$ all these inequalities are equalities.\qedhere
\end{enumerate}
\end{proof}

\restate{cor:monad-lifting}
\begin{proof}
This is an immediate consequence of \autoref{prop:nt-lifting}. For the unit take $F=\Id$ with evaluation function $\ev_F = \id_{[0,\top]}$, hence $\Kantorovich{F}{d} = \Wasserstein{F}{d}=d$ and $G=T$, $\ev_G=\ev_T$, $\lambda = \eta \colon \Id \Rightarrow T$. For the multiplication take $F = TT$, $G=T$, $\ev_F = \ev_{TT} = \ev_T \circ T\ev_T$, $\ev_G = \ev_T$ and $\lambda = \mu$.
\end{proof}

\subsection{\nameref{sec:tracemetrics}}
\restate{cor:lifting-distr-law}
\begin{proof}
For $FG$ we take the evaluation function $\ev_{FG} = \ev_F \circ F\ev_G$ and for $GF$ the evaluation function $\ev_{GF} = \ev_G \circ G\ev_F$. We have 
\begin{align*}
	\ev_{GF} \circ  \lambda_{[0,\top]} = \ev_G \circ G\ev_F \circ \lambda_{[0,\top]} \leq \ev_F \circ F\ev_G = \ev_{FG}
\end{align*}
By \autoref{prop:nt-lifting} we know that $\Wasserstein{GF}{d} \circ (\lambda_X \times \lambda_X) \leq \Wasserstein{FG}{d}$ and by \autoref{lem:compositionality2}.\ref{item:geqWasserstein} we have $\Wasserstein{G}{(\Wasserstein{F}{d})} \leq \Wasserstein{GF}{d}$. Plugging everything together we see that
\begin{align*}
	\Wasserstein{G}{(\Wasserstein{F}{d})}  \circ (\lambda_X \times \lambda_X) \leq \Wasserstein{GF}{d} \circ (\lambda_X \times \lambda_X) \leq \Wasserstein{FG}{d} = \Wasserstein{F}{(\Wasserstein{G}{d})}
\end{align*}
which is the desired nonexpansiveness.
\end{proof}

\restatewithstyle{exa:EMlaw:powfin}{restatedefinition}
\begin{proof}
The functors are a composition of known endofunctors. We have $F = \PowersetFinite(2 \times \_^A) = \PowersetFinite M_2$, and $G = 2 \times \PowersetFinite^A = M_2 \PowersetFinite$ where $M_2 := M(2,\_)$ is the endofunctor obtained from the machine bifunctor $M$ by fixing its first component to $2$. The evaluation functions are $\ev_F\colon \PowersetFinite(2 \times [0,1]^A) \to[0,1]$ where for $S \in \PowersetFinite(2 \times [0,1]^A)$ 
\begin{align*}
	\ev_F(S) &= \ev_{\PowersetFinite} \circ \PowersetFinite\ev_{M_2} (S)= \max\set{\ev_{M_2}(o,s) \mid (o,s) \in S} \\
	&=  \max\set{c \cdot \max_{a \in A} s(a) \mid (o,s) \in S} = c \cdot \max_{(o,s) \in S}\max_{a \in A}s(a)
\end{align*}
and $\ev_G \colon 2 \times (\PowersetFinite [0,1])^A \to [0,1]$ where for $(o,s) \in 2 \times (\PowersetFinite X)^A$
\begin{align*}
	\ev_G(o,s) = \ev_{M_2} \circ M_2(\ev_{\PowersetFinite})(o,s) &= \ev_{M_2}\big(o, \lambda a.\max s(a)\big) = c \cdot \max_{a \in A} \max s(a)\,.
\end{align*}

\noindent As we have seen in \autoref{exa:m2} we have compositionality for $F$. We want to apply \autoref{prop:nt-lifting}.\ref{prop:nt-lifting:3} to show nonexpansiveness. For this we have to check that the inequality $\ev_{G} \circ \lambda_{[0,1]} \leq \ev_{F}$ holds. Indeed we have:
\begin{align*}
	\ev_G \circ \lambda_{[0,1]}(S) = c \cdot \max_{a \in A}\max \set{s(a) \mid (o,s) \in S} = c \cdot \max_{a \in A}\max_{(o,s) \in S} s(a) = \ev_F(S) 
\end{align*}
which concludes the proof.
\end{proof}

\restatewithstyle{exa:EMlaw:distr}{restatedefinition}
\begin{proof}
We first quickly check that the definition is sound, i.e. that we get a probability distribution for each $a \in A$:
\begin{align*}
\sum_{x \in X}\left( \sum_{{s \in X^A, s(a)=x}}P([0,1],s)\right) &= \sum_{s \in X^A}P([0,1],s) = 1
\end{align*}

\noindent Having verified this, we now want to show nonexpansiveness. The involved bifunctors $F,G$ are given by the assignments $F(B,X) = \Distributions(B \times X^A)$ and $G(B,X) = B \times (\Distributions X)^A$ and arise from composition of the distribution functor, the identity functor and the machine bifunctor: We have $F = \Distributions \circ M$ and $G = M \circ (\Id \times \Distributions)$.

Since all of these functors have optimal couplings, we have compositionality and the canonical evaluation functions for the composed functors are
\begin{align*}
	&\ev_F := \ev_{\Distributions} \circ \Distributions\ev_M\colon\Distributions([0,1]\times [0,1]^A) \to [0,1] \quad \text{and}\\
	&\ev_G := \ev_M \circ M(\id_{[0,1]}, \ev_\Distributions) \colon [0,1] \times (\Distributions X)^A \to [0,1]\,.
\end{align*}
We will now define a function $\Lambda_X \colon \Distributions([0,1]^2\times (X\times X)^A)\to [0,1]^2\times (\Distributions (X\times X))^A$ which transfers $F$-couplings to suitable $G$-couplings in the following sense. For any $P_1,P_2 \in \Distributions([0,1] \times X^A)$ and any $P \in \Couplings{F}(P_1,P_2) \subseteq \Distributions([0,1] \times [0,1] \times (X\times X)^A)$ the function $\Lambda_X$ has to satisfy the following two requirements
\begin{align}
	&\Lambda_X(P) \in \Couplings{G}\big(\lambda_X(P_1), \lambda_X(P_2)\big)\label{eq:nt-bifunctor:cpl}\\
	& \EvaluationFunctor{G}(d_B,d)\big(\Lambda_X(P)\big)\leq \EvaluationFunctor{F}(d_B,d)(P)\label{eq:nt-bifunctor:claim2}
\end{align}
\noindent because then we have
\begin{align*}
	\Wasserstein{G}{(d_B,d)}(\lambda_X(P_1), \lambda_X(P_2)) &= \inf\set{\EvaluationFunctor{G}(d_B,d)(P') \mid P' \in \Couplings{G}\big(\lambda_X(P_1),\lambda_X(P_2)\big)}\\
	&\leq \inf\set{\EvaluationFunctor{G}(d_B,d)\big(\Lambda_{X}(P)\big) \mid P \in \Couplings{F}(P_1,P_2)}\\
	&\leq \inf\set{\EvaluationFunctor{F}(d_B,d)(P) \mid t \in \Couplings{F}(P_1,P_2)} = \Wasserstein{F}{(d_B,d)}(P_1,P_2) 
\end{align*}
\noindent which, due to compositionality, proves the desired nonexpansiveness of $\lambda_X$. So let us now define $\Lambda_X$ and prove that it satisfies the above requirements: For any set $X$ and any $P \in \Distributions([0,1] \times [0,1] \times (X\times X)^A)$ we define $\Lambda_X(P) = (\mathfrak{o}_1(P), \mathfrak{o}_2(P), \mathfrak{s}(P))$ where 
\begin{align*}
&\mathfrak{o}_1(P) = \sum_{r \in [0,1]} r \cdot P(r,[0,1],(X\times X)^A), \quad \mathfrak{o}_2(P) = \sum_{r \in [0,1]} r \cdot P([0,1],r,(X\times X)^A) \quad \text{and}\\
&\mathfrak{s}(P) \colon A \to \Distributions (X \times X), \quad \mathfrak{s}(P)(a)(x,y) = \!\!\!\!\sum_{{s \in (X \times X)^A,\ s(a)=(x,y)}}\!\!\!\!\!\!\!\!\!\!\!P([0,1]^2,s)\,.
\end{align*}

\noindent Observe that this is completely analogous to the definition of the components $\lambda_X$ of our distributive law where for any $Q \in \Distributions([0,1] \times X^A)$ we have $\lambda_X(Q) = (\mathfrak{o}'(Q), \mathfrak{s}'(Q))$ with 
\begin{align*}
	\mathfrak{o}'(Q) = \sum_{r \in [0,1]}\!r \cdot Q(r,X^A),  \quad \text{and} \quad \mathfrak{s}'(Q) \colon A \to \Distributions X,\ \ \mathfrak{s}'(Q)(a)(x) = \!\!\!\!\sum_{{s \in X^A, s(a)=x}}\!\!\!\!\!\!\!Q([0,1],s)\,.
\end{align*}

\noindent Let us now show that the above definition of $\Lambda_X$ satisfies our requirements. We thus assume from here on that $P \in \Couplings{F}(P_1, P_2)$ for some arbitrary $P_1,P_2 \in \Distributions([0,1] \times X^A)$ i.e. we know $F(\pi_i, \pi_i) = P_i$. In order to show \eqref{eq:nt-bifunctor:cpl}, we have to prove that the equation
\begin{align}
	G(\pi_i, \pi_i) \big(\Lambda_X(P)\big) = \lambda_X(P_i) \label{eq:nt-bifunctor:2}
\end{align}
holds. The left hand side of this equation evaluates to
\begin{align}
G(\pi_i, \pi_i) \big( \Lambda_X (P) \big) &= \big(\pi_i \times (\Distributions\pi_i)^A\big) \left(\mathfrak{o}_1(P), \mathfrak{o}_2(P), \mathfrak{s}(P)\right) = \big(\mathfrak{o}_i(P), \Distributions\pi_i \circ \mathfrak{s}(P)\big)\label{eq:nt-bifunctor:2:lhs}
\end{align}
and since $F(\pi_i, \pi_i) = P_i$ the right hand side of \eqref{eq:nt-bifunctor:2} evaluates to 
\begin{align}
\lambda_X (P_i) = \lambda_X \big( F(\pi_i, \pi_i) (P) \big) &= \Big(\mathfrak{o}'\left(\Distributions(\pi_i \times \pi_i^A)(P)\right),\mathfrak{s}'\left(\Distributions(\pi_i \times \pi_i^A)(P)\right)\Big)\,.\label{eq:nt-bifunctor:2:rhs}
\end{align}

\noindent In order to prove \eqref{eq:nt-bifunctor:2} we will thus have to show that $\mathfrak{o}'\left(\Distributions(\pi_i \times \pi_i^A)(P)\right) = \mathfrak{o}_i(P)$ and also $\mathfrak{s}'\left(\Distributions(\pi_i \times \pi_i^A)(P)\right) = \Distributions\pi_i \circ \mathfrak{s}(P)$ holds. We first compute
\begin{align*}
	\quad & \mathfrak{o}'\left(\Distributions(\pi_i \times \pi_i^A)(P)\right) = \sum_{r \in [0,1]}r \cdot \Distributions(\pi_i \times \pi_i^A)(P)(r,X^A)\\
	&=\sum_{r \in [0,1]}r \cdot \left(P \circ (\pi_i \times \pi_i^A)^{-1}[\set{r} \times X^A]\right)\\
	&=\sum_{r \in [0,1]}r \cdot P \left(\set{(o_1,o_2,s) \in \Distributions([0,1]^2 \times (X\times X)^A) \ \middle|\ \pi_i \times \pi_i^A(o_1,o_2,s) \in \set{r} \times X^A}\right)\\
	&=\sum_{r \in [0,1]}r \cdot P \left(\set{(o_1,o_2,s) \in \Distributions([0,1]^2 \times (X\times X)^A) \ \middle|\ o_i = r}\right) = \mathfrak{o}_i(P)
\end{align*}
showing that indeed the first components of the tuples in \eqref{eq:nt-bifunctor:2:lhs} and \eqref{eq:nt-bifunctor:2:rhs} are the same. For the second components we have
\begin{align*}
	\quad &\mathfrak{s}'\left(\Distributions(\pi_i \times \pi_i^A)(P)\right)(a)(x) = \sum_{\set{s \in X^A, s(a)=x}}\Distributions(\pi_i \times \pi_i^A)(P)([0,1],s)\\
	&=\sum_{{s \in X^A, s(a)=x}}\!\!\!\!\!P\left(\set{(o_1,o_2,s') \in \Distributions([0,1]^2 \times (X^2)^A) \ \middle|\ \pi_i \times \pi_i^A(o_1,o_2,s') \in [0,1]\times \set{s}}\right)\\
	&=\sum_{{s \in X^A, s(a)=x}}\!\!\!\!\!P\left(\set{(o_1,o_2,s') \in \Distributions([0,1]^2 \times (X^2)^A) \ \middle|\ \pi_i \circ s' = s}\right)\\
	&=\sum_{{s' \in (X\times X)^A, \pi_i \circ s'(a)=x}}P([0,1]^2,s')
\end{align*}
\noindent and
\begin{align}
(\Distributions\pi_i \circ \mathfrak{s}(P))(a)(x) & = \mathfrak{s}(P)(a) \circ \pi_i^{-1}[\set{x}] = \mathfrak{s}(P)(a)\left(\set{y \in X \times X \mid \pi_i(y) = x}\right)\nonumber\\
&= \sum_{{s \in (X \times X)^A, \pi_i \circ s(a)=x}}\!\!\!\!\!\!\!P([0,1]^2,s)
\end{align}
which shows that also the second components of \eqref{eq:nt-bifunctor:2:lhs} and \eqref{eq:nt-bifunctor:2:rhs} coincide. Therefore \eqref{eq:nt-bifunctor:2} holds i.e. we have proved $\Lambda_X(P) \in \Couplings{G}(\lambda_X(P_1), \lambda_X(P_2))$ as claimed in \eqref{eq:nt-bifunctor:cpl}. We now show \eqref{eq:nt-bifunctor:claim2}. For the left hand side of that inequality we compute
\begin{align}
\EvaluationFunctor{G}(d_B,d) \big( \Lambda_X (P)\big)&= \big(\ev_G \circ G(d_B,d)\big)\big(\lambda_X(P)\big) = \ev_G \Big( G(d_B,d) \big( \Lambda_X (P) \big)\Big)\nonumber\\
&= \ev_G \Big(\big(d_B \times (\Distributions d)^A\big) \big(\mathfrak{o}_1(P), \mathfrak{o}_2(P), \mathfrak{s}(P)\big)\Big)\nonumber\\
&= \ev_G\Big(d_B \big(\mathfrak{o}_1(P),\mathfrak{o}_2(P)\big), \lambda a. \Distributions d\big(\mathfrak{s}(P)(a)\big)\Big)\nonumber\\
&=\Big(\ev_M \circ M\left(\id_{[0,1]}, \ev_\Distributions\right) \Big)\Big(d_B \big(\mathfrak{o}_1(P),\mathfrak{o}_2(P)\big), \lambda a. \Distributions d\big(\mathfrak{s}(P)(a)\big)\Big)\nonumber\\
&=\ev_M\left(M\left(\id_{[0,1]}, \ev_\Distributions\right)\Big(d_B \big(\mathfrak{o}_1(P),\mathfrak{o}_2(P)\big), \lambda a. \Distributions d\big(\mathfrak{s}(P)(a)\big)\Big)\right) \nonumber\\
&=\ev_M\Bigg(d_B \big(\mathfrak{o}_1(P),\mathfrak{o}_2(P)\big), \ev_\Distributions^A\Big(\lambda a. \Distributions d\big(\mathfrak{s}(P)(a)\big)\Big)\Bigg) \nonumber\\
&=\ev_M\Bigg(d_B \big(\mathfrak{o}_1(P),\mathfrak{o}_2(P)\big), \lambda a.\ev_\Distributions\Big(\Distributions d\big(\mathfrak{s}(P)(a)\big)\Big)\Bigg)\nonumber\\
&=c_1d_B \big(\mathfrak{o}_1(P),\mathfrak{o}_2(P)\big) + \frac{c_2}{|A|}\sum_{a \in A}\ev_\Distributions\Big(\Distributions d\big(\mathfrak{s}(P)(a)\big)\Big)\label{eq:nt-bifunctor:claim2:1}
\end{align}
and since for each $a \in A$ we have
\begin{align*}
\ev_\Distributions\Big(\Distributions d\big(\mathfrak{s}(P)(a)\big)\Big) &= \sum_{r \in [0,1]} r \cdot \mathfrak{s}(P)(a) \big( d^{-1} [\set{r}] \big) = \sum_{(x,y) \in X^2} d(x,y) \cdot \mathfrak{s}(P)(a)(x,y)\\
&= \sum_{(x,y) \in X^2} d(x,y) \cdot \left(\sum_{{s \in (X \times X)^A,\ s(a)=(x,y)}}\!\!\!\!\!\!\!\!\!\!\!P([0,1]^2,s)\right)\\
&=\sum_{s \in (X\times X)^A} d\big(s(a)\big) \cdot P([0,1]^2, s)
\end{align*}
we may continue \eqref{eq:nt-bifunctor:claim2:1} as follows:
\begin{align}
\EvaluationFunctor{G}(d_B,d) \big( \Lambda_X (P)\big)&= c_1d_B \big(\mathfrak{o}_1(P),\mathfrak{o}_2(P)\big) + \frac{c_2}{|A|}\sum_{a \in A}\sum_{s \in (X\times X)^A} d\big(s(a)\big) \cdot P([0,1]^2, s)\,.\label{eq:nt-bifunctor:claim2:2}
\end{align}
For the right hand side of \eqref{eq:nt-bifunctor:claim2} we have
\begin{align}
	\EvaluationFunctor{F}&(d_B,d)(P) = \big(\ev_F \circ {F}(d_B,d)\big)(P) = \Big(\big(\ev_{\Distributions} \circ \Distributions\ev_M\big) \circ \Distributions\big(M(d_B, d)\big)\Big)(P)\nonumber\\
	&=\ev_{\Distributions} \Bigg( \Distributions\ev_M \Big(\Distributions\big(M(d_B, d)\big)(P)\Big) \Bigg) = \ev_{\Distributions} \Bigg( \Distributions\ev_M \Big(\Distributions\big(d_B \times d^A\big)(P)\Big) \Bigg)\nonumber\\
	&= \ev_{\Distributions} \Bigg(\Big(\Distributions\big(d_B \times d^A\big)(P)\Big) \circ \ev_M^{-1} \Bigg)=\sum_{r \in [0,1]} r \cdot \Big(\Distributions\big(d_B \times d^A\big)(P)\Big)\big(\ev_M^{-1}[\set{r}]\big)\nonumber\\
	&=\sum_{(o,s') \in [0,1] \times [0,1]^A} \!\!\!\! \ev_M(o,s') \cdot \Big(\Distributions\big(d_B \times d^A\big)(P)\Big)(o,s')\nonumber\\
	&=\sum_{(o,s') \in [0,1] \times [0,1]^A} \!\!\!\!\ev_M(o,s') \cdot P\Big(\big(d_B \times d^A\big)^{-1}\big[\set{(o,s')}\big]\Big)\nonumber\\
	&= \sum_{(o_1,o_2,s) \in [0,1]^2 \times (X\times X)^A} \ev_M\Big(\big(d_B \times d^A\big)(o_1,o_2,s)\Big) \cdot P(o_1,o_2,s)\nonumber\\
	&= \sum_{(o_1,o_2,s) \in [0,1]^2 \times (X\times X)^A} \ev_M\Big(d_B(o_1,o_2), \lambda a.d\big(s(a)\big)\Big) \cdot P(o_1,o_2,s)\nonumber\\
	&= \sum_{(o_1,o_2,s) \in [0,1]^2 \times (X\times X)^A}\left( c_1d_B(o_1,o_2) + \frac{c_2}{|A|}\sum_{a\in A} d\big(s(a)\big)\right) \cdot P(o_1,o_2,s)\nonumber\\
	&= c_1 \mathfrak{o}(P) + \frac{c_2}{|A|}\sum_{s \in (X \times X)^A}\sum_{a\in A} d\big(s(a)\big) \cdot P([0,1]^2,s)\label{eq:nt-bifunctor:claim2:3}
\end{align}
with
\begin{align*}
\mathfrak{o}(P) = \sum_{(o_1,o_2) \in [0,1]^2}\!\! d_B(o_1,o_2) \cdot P\big(o_1,o_2,(X\times X)^A\big)\,.
\end{align*}
Comparing \eqref{eq:nt-bifunctor:claim2:2} and \eqref{eq:nt-bifunctor:claim2:3} we see that in order to obtain inequality \eqref{eq:nt-bifunctor:claim2} we just have to show
\begin{align*}
d_B\big(\mathfrak{o}_1(P),\mathfrak{o}_2(P)\big)\leq\mathfrak{o}(P) 
\end{align*}
This is easily done using the fact that $d_B = d_e$ is the euclidean metric and the triangle inequality:
\begin{align*}
	d_B\big(\mathfrak{o}_1(P),&\mathfrak{o}_2(P)\big) = \\
	&= \left|\sum_{r_1 \in [0,1]} r_1 \cdot P\big(r_1,[0,1],(X\times X)^A\big) - \sum_{r_2 \in [0,1]} r_2 \cdot P\big([0,1],r_2,(X\times X)^A\big)\right|\\
	&=\left|\sum_{r_1,r_2 \in [0,1]} r_1 \cdot P\big(r_1,r_2,(X\times X)^A\big) - \sum_{r_1,r_2 \in [0,1]} r_2 \cdot P\big(r_1,r_2,(X\times X)^A\big)\right|\\
	&=\left|\sum_{r_1,r_2 \in [0,1]} (r_1 - r_2) \cdot P\big(r_1,r_2,(X\times X)^A\big)\right|\\
	&\leq \sum_{r_1,r_2 \in [0,1]} \left|(r_1 - r_2) \cdot P\big(r_1,r_2,(X\times X)^A\big)\right| \\
	&= \sum_{r_1,r_2 \in [0,1]} |r_1 - r_2| \cdot P\big(r_1,r_2,(X\times X)^A\big) = \mathfrak{o}(P)\,.
\end{align*}
We have thus also completed the proof of our second claim: the inequality \eqref{eq:nt-bifunctor:claim2}. 
\end{proof} 
\end{document}